\documentclass[twoside]{article}

\usepackage[preprint]{preprint}

\usepackage[round]{natbib}
\renewcommand{\bibname}{References}

\usepackage{amssymb}
\usepackage{amsthm}
\usepackage{mathtools}
\usepackage{enumitem}
\usepackage[dvipsnames]{xcolor}
\usepackage{algorithm}
\usepackage{algpseudocodex}
\algrenewcommand{\textproc}[1]{\textup{\textsc{#1}}}
\usepackage[bibliography=common]{apxproof}
\usepackage{graphicx}
\usepackage{booktabs}
\usepackage{caption}
\usepackage{subcaption}
\usepackage{tikz}
\usetikzlibrary{positioning, arrows.meta, shapes.geometric, calc, shapes.misc}
\tikzset{
  every picture/.append style={color=black},
  obs/.style = {name = #1, circle, draw, inner sep = 5pt, label = center:\tiny{$#1$}},
  causal/.style = {-{Stealth[length=2mm]}, semithick},
  bidirected/.style = {{Stealth[length=2mm]}-{Stealth[length=2mm]}, dashed, semithick}
}

\newcommand{\An}{\mathrm{An}}
\newcommand{\Ch}{\mathrm{Ch}}
\newcommand{\doo}{\mathrm{do}}
\newcommand{\+}[1]{\ensuremath{\mathbf{#1}}}
\newcommand{\cG}{\mathcal{G}}
\newcommand{\cF}{\mathcal{F}}
\newcommand{\cH}{\mathcal{H}}
\newcommand{\cC}{\mathcal{C}}
\newcommand{\cS}{\mathcal{S}}
\newcommand{\cP}{\mathcal{P}}
\newcommand{\cW}{\mathcal{W}}
\newcommand{\cI}{\mathcal{I}}

\newtheorem{definition}{Definition}

\newtheoremrep{cor}{Corollary}
\newtheoremrep{thm}{Theorem}
\newtheoremrep{lem}{Lemma}

\renewcommand{\appendixprelim}{}
\begin{document}

\runningtitle{What Should We Measure Next?}
\runningauthor{Väänänen, Cui, Cinelli, and Tikka}

\twocolumn[

\aistatstitle{What Should We Measure Next?\\ Finding Identification Strategies by Refining Mechanisms}

\aistatsauthor{Mikko Väänänen\textsuperscript{1} \and Fanyu Cui\textsuperscript{2} \and Carlos Cinelli\textsuperscript{3} \and Santtu Tikka\textsuperscript{1}}

\aistatsaddress{
\textsuperscript{1}Department of Mathematics and Statistics, University of Jyväskylä\\
\textsuperscript{2}Department of Biostatistics, Columbia University\\
\textsuperscript{3}Department of Statistics, University of Washington
} ]

\begin{abstract}
Canonical approaches in causal inference  treat model specification as fixed, assuming that researchers directly translate all relevant domain knowledge into a causal model, which can then be used to deduce its logical implications. Yet, in  practical applications, model specification is often an iterative process, and involves exploration and introspection---of the many aspects of the phenomenon one could investigate, which ones actually matter for the identification of the causal effect of interest?  In this paper we study the problem of iterative identification in partially specified causal models. We focus on determining where observing variables that intercept a direct effect or a confounding path between two variables could enable identification in semi-Markovian models. We give necessary and sufficient graphical conditions for when observing such variables can render an unidentifiable query identifiable, together with an efficient algorithm for locating all such opportunities in a given causal diagram. Our results can help analysts better navigate the model space by drawing attention to the parts of their substantive knowledge that could result in a successful identification strategy.
\end{abstract}

\section{Introduction}

Causal inference combines observations with assumptions about the data generating process. Once these assumptions are represented in a causal diagram, identification theory determines whether the effect of interest can be recovered from the available observational distribution \citep{pearl:b1995,shpitserpearl:2006}. When it cannot, a natural question arises: what additional information would make identification possible?

Answering this question requires deciding which further aspects of the phenomenon to bring into consideration. Researchers may possess considerable knowledge that has not been expressed in their current causal diagram, without recognizing its relevance to the effect of interest \citep{cinelli2025challenges}. A directed edge may summarize a mechanism they could describe in greater detail, and a bidirected edge may represent common causes they could investigate. A priori, a researcher could expand the mechanisms behind any and all of those arrows, but that is a formidable task. Ideally, we want a system that helps direct the researcher's attention to the information that matters.

In this paper, we use causal diagrams to guide that reflection. We characterize where observing mediators or common causes would render a previously nonidentifiable causal effect identifiable. These results direct the researcher toward particular mechanisms and sources of confounding, helping them recognize which parts of their domain knowledge could provide a way forward. To illustrate, consider the simplest setting, in which a researcher starts their analysis with the causal diagram in Figure~\ref{fig:simple_example}(a). Here, $X$ denotes the treatment and $Y$ the outcome, while the bidirected edge $X\leftrightarrow Y$ allows for unmeasured confounding between them. The causal effect of interest is therefore not identifiable without further assumptions. What next steps could an investigator take in such circumstances?

A well-known strategy for identifying the causal effect of $X$ on $Y$ is to block the confounding paths between them, an approach with a long history in causal inference \citep{rosenbaumrubin1983,robins1986,pearl:b1995}. In Figure~\ref{fig:simple_example}(b), these common causes are represented by expanding the edge $X \leftrightarrow  Y$  with a (possibly vector-valued) variable $N$ (for ``new''). The causal effect is now identifiable by backdoor adjustment. However, this is not the only identification strategy available here. An alternative route is to investigate the mediating channels through which the causal effect of $X$ flows to $Y$, as in Figure~\ref{fig:simple_example}(c). This is represented by expanding the directed edge $X \rightarrow Y$ with mediators $N$, which now allows identification of the causal effect through Pearl's celebrated frontdoor formula, even though the original confounding path $X \leftrightarrow Y$ remains intact.

For general causal models, we consider extensions of the original diagram that replace a directed edge with a path through mediators or a bidirected edge with observed common causes. We focus on expanding one edge at a time, as this represents the minimal refinement of the original causal diagram that may suffice to identify the effect of interest. We call these operations {\emph{edge expansions}} and call an edge \emph{ID-enabling} when its expansion makes the target effect identifiable.  These edges indicate where further substantive knowledge could yield a successful identification strategy.

\begin{figure}[t]
\centering
\begin{subfigure}[b]{0.14\textwidth}
  \centering
  \begin{tikzpicture}
    \node[obs = {X}] at (0, 0) {};
    \node[obs = {Y}] at (1.5, 0) {};
    \draw[causal] (X) -- (Y);
    \draw[bidirected, bend left=45] (X) to (Y);
  \end{tikzpicture}
\caption{}
\end{subfigure}
\begin{subfigure}[b]{0.14\textwidth}
  \centering
  \begin{tikzpicture}
    \node[obs = {X}] at (0, 0){};
    \node[obs = {Y}] at (1.5, 0) {};
    \node[obs = {N}, draw=blue, every label/.append style={text=black}] at (0.75, 0.75) {};
    \draw[causal] (X) -- (Y);
    \draw[causal, blue] (N) -- (X);
    \draw[causal, blue] (N) -- (Y);
  \end{tikzpicture}
\caption{}
\end{subfigure}
\begin{subfigure}[b]{0.18\textwidth}
  \centering
  \begin{tikzpicture}
    \node[obs = {X}] at (0,0) {};
    \node[obs = {N}, draw=blue, every label/.append style={text=black}] at (1,0) {};
    \node[obs = {Y}] at (2,0) {};
    \draw[causal, blue] (X) -- (N);
    \draw[causal, blue] (N) -- (Y);
    \draw[bidirected, bend left=45] (X) to (Y);
  \end{tikzpicture}
\caption{}
\end{subfigure}
\caption{Two ways to make $p(y|\doo(x))$ identifiable. (a) The original graph, in which the effect is not identifiable. (b) Observed common causes $N$ allow identification by backdoor adjustment. (c) Observed mediators $N$ allow identification through the frontdoor criterion. The newly observed node and its related edges are highlighted in blue.}
\label{fig:simple_example}
\end{figure}

Our contributions are as follows:
\begin{enumerate}
\item We derive simple graphical conditions for quickly finding ID-enabling directed and bidirected edges, and ruling out edges whose expansion cannot enable identification. These conditions are fast and powerful: across our experimental settings, they fully resolve $87.7$--$100\%$ of nonidentified queries and find an ID-enabling edge in more than $99.9\%$ of cases whenever a solution exists.
\item Next, we derive necessary and sufficient conditions for ID-enabling edges based on graphical obstructions to identification, known as \emph{hedges}. We then develop a sound, complete, and efficient algorithm for finding all such edges. For example, for a graph with $n$ nodes and $m$ edges, our algorithm retains at most $\min\{m,2n-2\}$ candidates, compared with $m$ when testing every edge.  We prove that the $2n-2$ bound is in fact sharp.
\item Finally, we perform exhaustive experiments on graphs with two through six nodes, quantifying how often identification can be enabled by expanding a directed or bidirected edge, as well as the coverage of our simple graphical conditions. In runtime comparisons on random samples of large graphs, we observe that our algorithm provides more than a thousand-fold speedup over testing every edge in,  for instance, dense graphs with 160 nodes.
\end{enumerate}
Discussion on related literature is available in Appendix~\ref{app:literature}.

\begin{toappendix}
\section{Related literature}
\label{app:literature}

Pearl's original account of causal diagrams explicitly suggests their use in seeking additional observations or experiments when an effect is not identifiable \citep{pearl:b1995}.

Within a given diagram, causal methods distinguish useful from harmful controls \citep{cinelli_forney_pearl_2024}, characterize and enumerate valid adjustment sets \citep{zander2019}, and find sets satisfying the front-door criterion \citep{jeong2022,wienobst2024}. These methods select among variables whose causal roles are already represented in the diagram. Measurement costs and statistical efficiency can further guide the choice of an adjustment set \citep{smucler2022,henckel:apa2019}. \citet{karvanen2021} extend this perspective to study design, using do-search to evaluate whether proposed additional data sources would identify a causal effect under specified causal assumptions---note, however, that here the proposals are made by the researcher, instead of found and suggested by an algorithm.

\citet{guo2026} are somewhat related to our emphasis on iterative model specification. Their interactive procedure asks users for primary adjustment sets and uses these inputs to construct sufficient adjustment sets, without requiring a prespecified graph. One of the differences here is that they start with a specific strategy in mind---identification via covariate adjustment---and then proceed to try to elicit knowledge for that strategy from the researcher. Our approach, instead, starts from a partially specified causal model and then directs the researcher's attention to where additional refinement enables identification. The question we solve and the problem scope are therefore different. For the edge expansions we consider, identification can proceed through any strategy available in non-parametric SCMs, and thus goes beyond covariate adjustment.

A further connection is the use of graphical obstructions to identification to guide changes in a study or its assumptions. Hedges have been used to formulate the problem of choosing interventions for identification and to develop algorithms that minimize their cost \citep{akbari2022,elahi2024,akbari2025}. \citet{akbari2023} study minimum cost deletions of directed and bidirected edges when the causal structure is uncertain, using hedges to guide the search. These approaches seek interventions on existing variables or assumptions that remove edges. Here we use hedges to find all individual edge expansions that enable identification through additional observations, refining the mechanisms represented in the original diagram.

Additional information can also change what is learned without measuring a new variable. Functional form restrictions, knowledge about plausible magnitudes, and equality constraints can aid identification or yield informative bounds on causal effects \citep{vanderweele2009monotonicity,cinelli2019scm,cinelli:jrssb2020,cinelliHazlett2025iv,kumor2020acid,zhang2021equality,maiti2025monotonicity,chernozhukov2026ovb,wang2026did}. Extending identification searches of the kind developed here to these other forms of knowledge is an important direction for future research.

\end{toappendix}

\section{Preliminaries}

Throughout the paper, boldface notation denotes sets, whereas non-bold notation denotes individual elements. Capital letters denote variables and small letters denote their values, value assignments, or edges of graphs.

A semi-Markovian structural causal model \citep[SCM, ][]{shpitserpearl:2006, pearl:2009} consists of observed variables $\+ V$, latent exogenous variables $\+ U$, and structural equations assigning each $V_i \in \+ V$ as a function of its observed parents and a subset of $\+ U$, where exogenous variables are mutually independent and each may be a parent of at most two observed variables. The causal structure of a semi-Markovian SCM is represented by an acyclic directed mixed graph \citep[ADMG,][]{richardson2003admg, richardson2023admg} $\cG = (\+ V, \+ D, \+ B)$ where $\+ V$ is the set of nodes, $\+ D$ is the set of directed edges and $\+ B$ is the set of bidirected edges. Directed edges represent possible direct causal effects between members of $\+ V$ and bidirected edges represent dependence induced by an unobserved common cause in $\+ U$. For disjoint $\+ X, \+ Y \subset \+ V$, the causal effect of $\+ X$ on $\+ Y$ is described by the interventional distribution $p(\+y|\doo(\+ x))$, where the do-operator denotes replacing the structural equations for $\+ X$ by the fixed assignments $\+ X = \+ x$, while leaving all other equations unchanged. Graphically, the intervention removes all edges entering $\+ X$.

For any ADMG $\cG$, $V(\cG) \equiv \+ V$, $E(\cG) \equiv \+ D \cup \+ B$, $\cG_{\bar{\+ X}, \underline{\+ Z}}$ is the graph obtained from $\cG$ by removing all incoming edges to $\+ X$ and all outgoing edges from $\+ Z$. The set of ancestors of nodes in $\+ Y$ in $\cG$ is denoted by  $\An(\+ Y)_\cG$ including $\+ Y$ by convention. Similarly, we use $\Ch(\+ Y)_\cG$ to denote the children of $\+ Y$ in $\cG$.
In addition, we denote by $\cG_{-e}$ the graph obtained from $\cG$ by removing the edge $e$. If $e \notin E(\cG)$, then $\cG_{-e} = \cG$. A subgraph $\cG_1$ of $\cG_2$ is denoted by $\cG_1 \subseteq \cG_2$. An edge subgraph is a graph obtained from $\cG_2$ by removing zero or more edges while preserving all nodes. The induced subgraph of a set of nodes $\+ W$ is denoted by $\cG[\+ W]$. For a path $\pi$, we use $E(\pi)$ to denote its set of edges.

Next, we define the extended graph obtained by expanding an edge. Intuitively, the new node represents variables involved in the causal or confounding relation summarized by the original edge.

\begin{definition}[Extended Graph] \label{def:extended_graph}
Let $\cG=(\+ V, \+ D,\+ B)$ and let $N \notin \+ V$ be a new node. Expanding an edge $e$ of $\cG$ produces an extended graph $\cG_e$ as follows:
\setlist{nolistsep}
\begin{itemize}[noitemsep]
  \item If $e=(A,V)\in \+ D$, then $\cG_e=(\+ V', \+ D', \+ B)$, where $\+ V' = \+ V \cup \{N\}$ and $\+ D' = (\+ D \setminus \{e\}) \cup \{(A, N), (N, V)\}$.
  \item If $e = \{A,V\} \in \+ B$, then $\cG_e=(\+ V', \+ D', \+ B')$, where $\+ V' = \+ V \cup \{N\}$, $\+ D' = \+ D \cup \{(N, A),(N, V)\}$ and $\+ B' = \+ B \setminus \{e\}$.
\end{itemize}
\end{definition}
In both extensions, the new node $N$ has no incident bidirected edges.

These extensions introduce additional causal detail without imposing new restrictions on the distribution of the original variables. Indeed, $\cG$ is the latent projection of $\cG_e$ onto $\+ V$. The extension therefore preserves exactly the restrictions imposed by the graph on the original observational distribution $p(\+ v)$ \citep[Theorem~4.13]{evans2016graphs}.

We consider identifiability in terms of the assumed causal structure and the observed data distribution.
\begin{definition}[Identifiability]
  Let $\cG = (\+ V, \+ D, \+ B)$ be a causal diagram and let $\+ X, \+ Y \subset \+ V, \+ X \cap \+ Y = \emptyset$. The causal effect $p(\+ y|\doo(\+ x))$ is \emph{identifiable} from $p(\+ v)>0$ in $\cG$ if $p_1(\+ y|\doo(\+ x)) = p_2(\+ y|\doo(\+x))$ for every pair of causal distributions $p_1,p_2$ induced by SCMs compatible with $\cG$ such that $p_1(\+ v) = p_2(\+ v) > 0$ for all value assignments $\+ v$ to $\+ V$.
\end{definition}
In an extended graph, $N$ denotes the new node, and identification concerns the joint distribution ${p(\+ v, n)>0}$ in such graphs.
\section{Graphical Criteria for Finding ID-enabling Edges}
\label{sec:graphical-criteria}
We present some easy-to-use graphical conditions for determining whether an edge expansion can identify a previously nonidentifiable query. The identifiability or nonidentifiability of a causal effect can be quickly determined in the presence of specific graphical patterns \citep{tianpearl2003, pearl:2009}. These patterns can be used to construct sufficient conditions for finding ID-enabling edges and also to determine when an  edge cannot be ID-enabling. The first such condition is related to bidirected paths between the intervention variable and its children.

\begin{thmrep} \label{thm:bidirected_paths}

  If a bidirected edge $b \in \+ B$ belongs to every bidirected path between $X$ and any of its children in $\cG[\An(\+ Y)_\cG]$, then $p(\+ y|\doo(x))$ is identifiable from $p(\+ v, n)$ in the extended graph $\cG_b$.
\end{thmrep}
\begin{proof}
  There are no bidirected paths between $X$ and any of its children in $\cG_b[\An(\+ Y)_{\cG_b}]$ because $b$ intercepts all such paths in $\cG[\An(\+ Y)_\cG]$ and expanding $b$ cannot create new bidirected paths between $X$ and its children. Thus, $p(\+ y|\doo(x))$ is identifiable from $p(\+ v, n)$ in $\cG_b$ by Theorem~4 of \citet{tian2002}.
\end{proof}

In essence, Theorem~\ref{thm:bidirected_paths} states that replacing a bidirected edge with its common causes will make the causal effect of interest identifiable, if the edge intercepts all bidirected paths from $X$ to its children. For directed edges, it suffices that only one child of $X$ is connected to $X$ by a bidirected path in the ancestor graph.

\begin{thmrep}\label{thm:directed_child}
  Suppose that only one child $C$ of $X$ in $\cG[\An(\+ Y)_\cG]$ is connected to $X$ via a bidirected path in that graph. Let $d = (X, C)$. Then $p(\+ y|\doo(x))$ is identifiable from $p(\+ v, n)$ in the extended graph $\cG_d$.
\end{thmrep}
\begin{proof}
Let $N$ denote the new node in $\cG_d$. Subdividing $X\rightarrow C$ preserves directed ancestry between the original nodes. Since $C\in\An(\+ Y)_\cG$, we have
\[
\An(\+ Y)_{\cG_d}=\An(\+ Y)_\cG\cup\{N\}.
\]
In the induced ancestor graph, the children of $X$ in $\cG_d$ are $N$ and its original children other than $C$. The variable $N$ has no incident bidirected edges. The bidirected paths between original nodes are unchanged, so none of the remaining children is connected to $X$ by a bidirected path, by hypothesis. Thus no child of $X$ is connected to $X$ by a bidirected path in $\cG_d[\An(\+ Y)_{\cG_d}]$. Theorem~4 of \citet{tian2002} establishes identification from $p(\+ v, n)$.
\end{proof}

Interestingly, it is also easy to show that whenever a directed edge expansion enables identification, it can never result in a backdoor formula.

\begin{thmrep}\label{thm:directed-no-backdoor}
  If $p(\+ y|\doo(\+ x))$ is not identifiable in $\cG = (\+ V, \+ D, \+ B)$, then no set satisfies the backdoor criterion relative to $\+ X$ and $\+ Y$ in any directed edge expansion $\cG_d$, $d\in \+ D$.
\end{thmrep}
\begin{proof}
Suppose $\+ Z$ satisfies the backdoor criterion in $\cG_d$. The case $N\notin\+ Z$ is immediate. Otherwise, let $A$ be the parent of $N$, and define
\[
\+ Z'=(\+ Z\setminus\{N\})\cup\{A\}.
\]
This set contains no descendants of $\+ X$. Moreover,
\[
\bigl(\+ Y\setminus\+ Z'\perp\!\!\!\perp\+ X\mid\+ Z'\bigr)_{\cG_{\underline{\+ X}}}.
\]
Hence $p(\+ y|\doo(\+ x))$ is identifiable in $\cG$, contradicting the hypothesis.
\end{proof}

Theorems~\ref{thm:bidirected_paths} and~\ref{thm:directed_child} describe cases in which these extensions enable identification. Next, we give a necessary condition for an edge to be ID-enabling.

\begin{figure}[t]
\centering
\begin{subfigure}[t]{0.45\columnwidth}
\centering
\begin{tikzpicture}[x=.7cm,y=.7cm]
\path[use as bounding box] (-.3,-1.55) rectangle (4.5,2.3);
\node[obs={X}] at (0,0) {};
\node[obs={W}] at (2.1,0) {};
\node[obs={Y}] at (4.2,0) {};
\node[obs={Z}] at (2.1,-1.25) {};
\draw[causal,blue] (X) -- (W);
\draw[causal] (W) -- (Y);
\draw[causal] (Z) -- (W);
\draw[causal] (Z) -- (Y);
\draw[bidirected, bend left=45] (X.north) to (Y.north);
\draw[bidirected,blue, bend right=45] (X.south) to (Z.west);
\draw[bidirected,blue, bend left=45] (Z.north west) to (W.south west);
\end{tikzpicture}
\caption{}
\label{fig:running-example-direct}
\end{subfigure}\hfill
\begin{subfigure}[t]{0.53\columnwidth}
\centering
\begin{tikzpicture}[x=.7cm,y=.7cm]
\path[use as bounding box] (-.3,-1.55) rectangle (5.5,2.3);
\node[obs={Z}] at (0,0) {};
\node[obs={X}] at (1.8,0) {};
\node[obs={A}] at (3.5,.75) {};
\node[obs={B}] at (3.5,-.75) {};
\node[obs={Y}] at (5.2,0) {};
\draw[causal,blue] (Z) -- (X);
\draw[causal] (X) -- (A);
\draw[causal] (X) -- (B);
\draw[causal] (A) -- (Y);
\draw[causal] (B) -- (Y);
\draw[bidirected, bend right=45] (Z) to (X);
\draw[bidirected, bend left=25] (Z) to (A);
\draw[bidirected, bend right=45] (Z) to (B);
\draw[bidirected, bend left=45] (Z.north) to (Y.north);
\draw[bidirected] (X) -- (Y);
\end{tikzpicture}
\caption{}
\label{fig:running-example-hcid}
\end{subfigure}
\caption{Edges shown in blue are those whose expansion identifies $p(y|\doo(x))$. Theorems~\ref{thm:bidirected_paths}, \ref{thm:directed_child}, and~\ref{thm:one-bidirected-path} fully resolve (a), but not (b).}
\label{fig:running-example}
\end{figure}

\begin{thmrep}\label{thm:one-bidirected-path}
Suppose that there exists a bidirected path $\pi$ between $X_i\in\+ X$ and one of its children $C$ in $\cG[\{X_i\}\cup(\An(\+ Y)_{\cG_{\bar{\+ X}}}\setminus\+ X)]$. If $p(\+ y|\doo(\+ x))$ is identifiable in the extended graph $\cG_e$, then $e\in E(\pi)\cup\{(X_i,C)\}$.
\end{thmrep}
\begin{proof}
Suppose $e\notin E(\pi)\cup\{(X_i,C)\}$. Then the path $\pi$ and the edge $X_i\rightarrow C$ remain in $\cG_e$. Let $\cF$ consist of the nodes and bidirected edges of $\pi$ together with $X_i\rightarrow C$, and let $\cF'=\cF[V(\pi)\setminus\{X_i\}]$. Then $\cF$ and $\cF'$ are C-forests with root set $V(\pi)\setminus\{X_i\}$, $X_i\in V(\cF)$, and $\+ X\cap V(\cF')=\emptyset$. This root set lies in $\An(\+ Y)_{\cG_{\bar{\+ X}}}$ by assumption, and hence in $\An(\+ Y)_{\cG_{e,\bar{\+ X}}}$, because expanding an edge preserves directed paths between original nodes. Thus $(\cF,\cF')$ is a hedge for $p(\+ y|\doo(\+ x))$ in $\cG_e$, and $p(\+ y|\doo(\+ x))$ is not identifiable in $\cG_e$ by Theorem~\ref{thm:hedge_criterion}.
\end{proof}

The theorem says that whenever such bidirected paths between $X$ and its children exist, we can restrict attention to those paths and the corresponding directed edges. As a corollary, if two children have such paths that share no edges, expanding one edge will always leave at least one obstruction intact.

\begin{correp}\label{thm:twobidirected}
Suppose that there exist two bidirected paths, $\pi_1$ and $\pi_2$, between $X$ and two of its children, $C_1$ and $C_2$, respectively, in $\cG[\{X\}\cup\An(\+ Y)_{\cG_{\bar X}}]$ such that $E(\pi_1)\cap E(\pi_2)=\emptyset$. Then $p(\+ y|\doo(x))$ is not identifiable in any extended graph of $\cG$.
\end{correp}
\begin{proof}
By Theorem~\ref{thm:one-bidirected-path}, any ID-enabling edge must lie in both $E(\pi_1)\cup\{(X,C_1)\}$ and $E(\pi_2)\cup\{(X,C_2)\}$. These sets are disjoint.
\end{proof}

For example, suppose $X$ has two children that are ancestors of $\+ Y$, each connected to $X$ by a bidirected edge. No single edge expansion can eliminate both obstructions to identification.

The combination of Theorems~\ref{thm:bidirected_paths}, \ref{thm:directed_child}, and~\ref{thm:one-bidirected-path} provides simple yet powerful graphical criteria to find and rule out ID-enabling edges. To illustrate, consider Figure~\ref{fig:running-example}(a). Theorems~\ref{thm:bidirected_paths} and~\ref{thm:directed_child} allow us to find the edges $X\rightarrow W$, $X\leftrightarrow Z$, and $Z\leftrightarrow W$, while Theorem~\ref{thm:one-bidirected-path} rules out all the rest.
In fact, in our exhaustive coverage experiments (Section~\ref{sec:criteria-coverage}), these criteria fully resolve $\geq 90\%$ of nonidentified queries across graphs of 2 to 6 nodes. In our random samples of large graphs, this number reached 92\%--100\% of the sampled cases. However, while powerful, these graphical conditions are not complete; Figure~\ref{fig:running-example}(b) shows an example in which the theorems neither find nor exclude any edges. This motivates our developments in the following sections.
\section{Hedges and the ID algorithm}

The identifiability of a causal effect is related to the structure of the underlying causal graph. In particular, certain graphical structures characterize situations in which a causal effect cannot be identified from the observed data distribution. Hedges are one such graphical structure \citep{shpitserpearl:2006}.

We first introduce the notions of a \textit{forest}, a \textit{C-component}, and a \textit{C-forest}. A \textit{forest} is a graph in which each node has at most one child, and a \textit{C-component} is a graph in which every pair of observed nodes is connected by a bidirected path. A graph that is both a forest and a C-component is called a C-forest. It is often useful to identify the set of nodes in a graph that have no children, which we refer to as the \textit{root set} of the graph. Accordingly, a C-forest whose root set is $\+ R$ is called an $\+ R$-rooted C-forest. We can now use these concepts to define a hedge.

\begin{definition}[Hedge]
Let $\cG = (\+ V, \+ D, \+ B)$ and let $\+ X, \+ Y \subset \+ V$, $\+ X \cap \+ Y = \emptyset$. Let $\cF$ and $\cF'$ be $\+ R$-rooted C-forests such that $\+ X \cap V(\cF) \neq \emptyset$, $\+ X \cap V(\cF') = \emptyset$, $\cF' \subset \cF \subseteq \cG$, and $\+ R \subseteq \An(\+ Y)_{\cG_{\bar{\+ X}}}$. Then the pair $(\cF, \cF')$ forms an $\+R$-rooted hedge for $p(\+ y|\doo(\+ x))$ in $\cG$.
\end{definition}

Since the root set of $\cF$ and $\cF'$ plays an important role in hedge construction, we also include the root set explicitly in the hedge definition.

The starting point for finding ID-enabling edges is a causal graph in which the causal effect of interest is not identifiable. In such a graph, nonidentifiability can be characterized both graphically through the existence of a hedge and algorithmically through the failure of the ID algorithm. In particular, \citet{shpitserpearl:2006} established that a causal effect is nonidentifiable if and only if the graph contains a hedge for the causal effect; a condition that is equivalent to the failure of the ID algorithm. These results were later clarified by \citet{Shpitser:2023}. The main identification results are stated formally below.

\begin{thm}[\citet{Shpitser:2023}, Proposition~1, Condition~1] \label{thm:hedge_criterion}
  The causal effect $p(\+ y|\doo(\+ x))$ is not identifiable in $\cG$ if and only if there exists a hedge for $p(\+ y|\doo(\+ x))$ in $\cG$.
\end{thm}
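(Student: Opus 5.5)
The plan is to prove the two directions separately, following the structure of \citet{shpitserpearl:2006} as clarified in \citet{Shpitser:2023}. Sufficiency of the hedge for nonidentifiability is the direction that needs a construction. Necessity, meaning that no hedge implies identifiability, goes through the ID algorithm: it is sound, and whenever it fails it exhibits a hedge.

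\emph{Hedge $\Rightarrow$ nonidentifiable.} Let $(\cF,\cF')$ be an $\+R$-rooted hedge.
\begin{enumerate}
\item I first reduce to the edge subgraph of $\cG$ consisting of $\cF$, plus directed paths from $\+R$ to $\+Y$ in $\cG_{\bar{\+X}}$. Every SCM compatible with a subgraph is compatible with $\cG$, since we may ignore the extra edges, so two witnessing models there are also witnessing models in $\cG$.
\item On $\cF$ I build two SCMs over binary variables, using uniform binary latents for the bidirected edges.
\begin{itemize}
\item In $M_1$, each observed node equals the parity (XOR) of all its observed and latent parents in $\cF$.
\item $M_2$ is identical, except that nodes of $\cF'$ ignore parents lying in $V(\cF)\setminus V(\cF')$.
\end{itemize}
Because $\cF$ and $\cF'$ are C-forests, a counting argument over latent configurations shows that the two models induce the same observational distribution. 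Under $\doo(\+x)$ with $\+X\cap V(\cF)\neq\emptyset$, however, the parity of the root set $\+R$ has different distributions in $M_1$ and $M_2$: in $M_2$ the intervention has no effect on $\cF'$, while in $M_1$ it shifts the parity.
\item I then propagate this difference to $\+Y$ along the directed paths from $\+R$ in $\cG_{\bar{\+X}}$, letting each node on a path copy, or XOR in, the value of its predecessor. Finally I add small independent noise to every variable so that $p(\+v)>0$ holds, and check that the noise perturbs the two interventional distributions differently enough that they remain distinct.
\end{enumerate}

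\emph{No hedge $\Rightarrow$ identifiable.} I use the recursive ID algorithm and prove two facts.
\begin{enumerate}
\item Each non-failing line is justified by do-calculus or by the C-component factorization. This gives soundness: whenever the algorithm returns an expression, it equals $p(\+y|\doo(\+x))$.
\item The algorithm can fail only in the line where the current subgraph $\cG'$ is a single C-component $\cS$ and the current district $\cS'$ of $\cG'\setminus\+X$ satisfies $\cS'\subsetneq\cS$. At that point I extract edge subgraphs of $\cS$ and $\cS'$ that are C-forests sharing the root set of $\cS'$. The earlier recursion steps guarantee that these roots are ancestors of $\+Y$ in $\cG_{\bar{\+X}}$, because the algorithm repeatedly restricts to ancestral sets. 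The extracted pair is therefore a hedge in $\cG$.
\end{enumerate}
Taking the contrapositive, if there is no hedge the algorithm succeeds, and by soundness the effect is identifiable.

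The main obstacle is the first direction: proving exact equality of the two observational distributions under the parity construction, and then preserving the interventional discrepancy once the construction is extended to $\+Y$ and perturbed for positivity. I would first establish the parity claim by induction on the number of nodes of $\cF$, removing a leaf (a node with no parent in $\cF$) at each step. Positivity I would handle by a continuity argument: the discrepancy is a polynomial in the noise level that is nonzero at noise level zero, so it is nonzero for all sufficiently small noise.
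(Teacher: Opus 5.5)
The paper does not prove this statement; it imports it from \citet{Shpitser:2023}. Your outline follows the classical route of \citet{shpitserpearl:2006}, but each direction has a step that fails or is missing as written.

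\emph{Hedge $\Rightarrow$ nonidentifiable.} Adding independent noise to every variable breaks the observational equality. Your continuity argument protects only the interventional gap, whereas $p_1(\+v)=p_2(\+v)>0$ must hold exactly. Take $\cF$ with $X\rightarrow R$ and $X\leftrightarrow R$ (latent $U$), $\cF'=\{R\}$, and noise $N_X,N_R\sim\mathrm{Bernoulli}(\epsilon)$. Then $M_1$ gives $R=(U\oplus N_X)\oplus U\oplus N_R=N_X\oplus N_R$, while $M_2$ gives $R=N_R$, so $p_1(R=1)=2\epsilon(1-\epsilon)\neq\epsilon=p_2(R=1)$. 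In general, the parity of $\+R$ absorbs the noise of every node of $\cF$ in $M_1$, but only that of $\cF'$ in $M_2$.

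The repair is to put noise only on nodes of $\cF'$ and on nodes outside $\cF$. Change variables $\+v\mapsto\+e$, where $e_V$ is the XOR of $V$ and the observed parents it reads. Both models then give $p(\+v_{\cF})=2^{1-|V(\cF)|}P\bigl(\bigoplus_{V\in V(\cF')}N_V=\bigoplus_{r\in\+R}v_r\bigr)>0$. This argument also replaces your leaf-deletion induction, which can disconnect the C-component.

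The same example exposes two further pitfalls. First, $M_2$ must also ignore the latents that $\cF'$ shares with $V(\cF)\setminus V(\cF')$. If $R$ keeps $U$, then $R=X$ in $M_2$ but $R=0$ in $M_1$, even without noise. Second, directed paths from $\+R$ to $\+Y$ in $\cG_{\bar{\+X}}$ may run through non-treatment nodes of $\cF\setminus\cF'$. XOR-ing a predecessor into such a node changes the parity constraint of $M_1$ but not that of $M_2$. Instead, carry the propagated bit in a separate coordinate along a forest of paths.

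\emph{No hedge $\Rightarrow$ identifiable.} The missing step is the claim that the failure pair ``is therefore a hedge in $\cG$''. You check the roots, but a hedge for $p(\+y|\doo(\+x))$ also needs $V(\cF)\cap\+X\neq\emptyset$ for the original $\+X$. The local treatment set at the failing call is generally not contained in $\+X$. The recursion enlarges it by $\+W=(\+V\setminus\+X)\setminus\An(\+Y)_{\cG_{\bar{\+X}}}$. The factorization step also calls each district $\+S_i$ with treatments $\+V\setminus\+S_i$, which contain the other districts $\+S_j$. So your $\cS'\subsetneq\cS$ only produces a local treatment node. This is the delicate point behind the correction in \citet{Shpitser:2023}: the original hedge corollary of \citet{shpitserpearl:2006}, phrased via hedges for sub-queries $p(\+y'|\doo(\+x'))$, is incorrect as stated.

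To close the gap, use the ancestral restriction. Every node of the failing C-component has a directed path inside it to the local outcome set, which is $\+Y$ or some $\+S_i$. This set lies in $\An(\+Y)_{\cG_{\bar{\+X}}}$ because of the $\+W$ step. The nodes of the component outside the local district lie in $\+X$, in $\+W$, or in other districts $\+S_j$.
\begin{itemize}
\item A node of $\+W$ reaches the local outcome set only through a node of $\+X$, since otherwise it would be an ancestor of $\+Y$ in $\cG_{\bar{\+X}}$.
\item A node of $\+S_j$ is bidirected-connected to $\+S_i$ only through $\+X\cup\+W$.
\end{itemize}
Hence the component meets $\+X$. The $\+W$ step, not the ancestral restriction alone, is also what places the roots in $\An(\+Y)_{\cG_{\bar{\+X}}}$.
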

\begin{cor} \label{thm:not-id-when-hedge}
  The causal effect $p(\+ y|\doo(\+ x))$ is not identifiable in $\cG$ if and only if the ID algorithm fails with inputs $(\+ Y, \+ X, p(\+ v), \cG)$.
\end{cor}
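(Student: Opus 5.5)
This corollary follows by composing Theorem~\ref{thm:hedge_criterion} with the known completeness result for the ID algorithm; no new graphical argument is needed. The plan is to chain two equivalences: nonidentifiability is equivalent to the existence of a hedge (Theorem~\ref{thm:hedge_criterion}), and the existence of a hedge is equivalent to failure of the ID algorithm run on $(\+ Y, \+ X, p(\+ v), \cG)$.

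For the direction ``ID fails $\Rightarrow$ not identifiable'', I will invoke the soundness and failure analysis of \citet{shpitserpearl:2006}. The ID algorithm fails only at its FAIL line, which is reached for some recursive call with inputs $(\+ Y', \+ X', P', \cG')$ where $\cG'$ is a C-component subgraph. At that line, $\cG'$ has a single C-component $S'$, and $S'$ is contained in some C-component of $\cG'[\+ V'\setminus \+ X']$.

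From the two C-components $\cG'$ and $\cG'[S']$, the cited work constructs C-forests $\cF \supseteq \cF'$ with a common root set $\+ R \subseteq \An(\+ Y)_{\cG_{\bar{\+ X}}}$, where $\cF$ intersects $\+ X$ and $\cF'$ does not. The construction takes spanning forests of the C-components that use directed edges toward the ancestral root set. The result is a hedge in $\cG$, and Theorem~\ref{thm:hedge_criterion} then gives nonidentifiability.

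For the converse, ``not identifiable $\Rightarrow$ ID fails'', I will argue by contraposition. If the ID algorithm returns an expression, soundness of each of its lines (truncated factorization, C-component decomposition, and the do-calculus rules used in its recursive steps) shows that the returned functional of $p(\+ v)$ equals $p(\+ y|\doo(\+ x))$ in every SCM compatible with $\cG$ with positive $p(\+ v)$. Two SCMs with the same positive $p(\+ v)$ therefore agree on the effect, so the effect is identifiable.

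The main obstacle is the hedge-extraction step at the FAIL line, in particular checking that the root set lies in $\An(\+ Y)_{\cG_{\bar{\+ X}}}$ of the original graph rather than only in the recursively reduced graph. I would handle this by tracking the invariant that each recursive call's outcome set is ancestral to $\+ Y$ in $\cG_{\bar{\+ X}}$. Since this is exactly Theorem~4/Corollary~3 of \citet{shpitserpearl:2006}, as clarified by \citet{Shpitser:2023}, I will cite it rather than redo it.
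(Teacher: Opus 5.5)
Your proposal is correct and matches the paper, which gives no separate proof and treats the corollary as an immediate consequence of Theorem~\ref{thm:hedge_criterion} combined with the soundness of ID and the extraction of a hedge from its failure pair, as established by \citet{shpitserpearl:2006} and clarified by \citet{Shpitser:2023}. One slip to fix: at the FAIL line it is $\cG'[\+ V'\setminus\+ X']$ that has a single C-component $\+ S$, strictly contained in the single C-component $\+ V'$ of $\cG'$, not the reverse. Also, besides your root-set invariant, the cited extraction must guarantee that $V(\cF)$ meets the original $\+ X$, because the local intervention set at the failing call can consist of nodes added in earlier recursive steps.
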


Our method to reduce the set of potentially ID-enabling edges relies on the existence of certain types of hedges in the graph. However, there is no systematic way to directly locate these hedge structures in the original graph: the ID algorithm does not return the exact hedge structure upon failure, but a pair of graphs that can be used to construct the hedge. This applies both to the original formulation by \citet{shpitserpearl:2006} and the modern one line formula \citep{richardson2023admg}.

To reduce the set of potentially ID-enabling directed edges, we need to identify a set of nodes that is guaranteed to form the root set of some hedge in the graph. The following lemma provides such a set by showing that any hedge $(\cF, \cF')$ can be transformed into a hedge whose root set is $V(\cF')$.

\begin{lemrep} \label{lem:hedge-from-hedge}
Let $\cH$ and $\cH'$ be graphs obtained from graphs $\cF$ and $\cF'$, respectively, by removing all outgoing edges from nodes in $V(\cF')$. If $(\cF, \cF')$ forms a hedge then the pair $(\cH, \cH')$ forms a $V(\cF')$-rooted hedge.
\end{lemrep}
\begin{proof}
Since the pair $(\cF, \cF')$ forms a hedge, it satisfies the following conditions of a hedge:
\begin{enumerate}
  \item $V(\cF) \cap \+ X \neq \emptyset$, $V(\cF') \cap \+ X =  \emptyset$.
  \item $\cF' \subset \cF$.
  \item $\cF$ and $\cF'$ are C-forests.
  \item $\cF$ and $\cF'$ have the same root set $\+ R$.
  \item  $\+ R \subseteq \An(\+ Y)_{\cG_{\bar{\+ X}}}$.
\end{enumerate}
Next, we show that these properties also hold for $\cH$ and $\cH'$ with root set $V(\cF')$, which proves that $(\cH, \cH')$ forms a $V(\cF')$-rooted hedge.
\begin{enumerate}
  \item Since $V(\cH) = V(\cF)$ and $V(\cH') = V(\cF')$, we have $V(\cH) \cap \+ X \neq \emptyset$ and $V(\cH') \cap \+ X = \emptyset$.

  \item Because $\cF' \subset \cF$ and only edges common to $\cF'$ and $\cF$ are removed from them to obtain $\cH'$ and $\cH$, respectively, it follows that $\cH' \subset \cH$.

  \item Since only directed edges are removed from $\cF$ and $\cF'$, $\cH$ and $\cH'$ are C-forests.

  \item Since $\cF$ and $\cF'$ have the same root set, the root sets of $\cH$ and $\cH'$ can consist only of nodes in $V(\cF')$. Since $\cH$ and $\cH'$ are obtained from $\cF$ and $\cF'$ by removing all outgoing edges from nodes in $V(\cF')$, the nodes in $V(\cF')$ have no children in $\cH$ and $\cH'$. Therefore, the root set of both $\cH$ and $\cH'$ is $V(\cF')$.

  \item  We have $V(\cF') \setminus \+ R\subseteq \An(\+ R)_{\cF'}$, since otherwise $\+ R$ could not be the root set of $\cF'$. Since $\+ X \cap V(\cF') = \emptyset$, we also have $\cF' \subseteq \cG_{\bar{\+ X}}$. Together, these imply that $V(\cF') \setminus \+ R \subseteq \An(\+ R)_{\cG_{\bar{\+ X}}}$. Furthermore, since $\+ R \subseteq \An(\+ Y)_{\cG_{\bar{\+ X}}}$ and $V(\cF') \setminus \+ R \subseteq \An(\+ R)_{\cG_{\bar{\+ X}}}$, it follows that $V(\cF') \setminus \+ R \subseteq \An(\+ Y)_{\cG_{\bar{\+ X}}}$. Finally, combining the above results with $\+ R \cup (V(\cF') \setminus \+ R) = V(\cF')$, we obtain $V(\cF') \subseteq \An(\+ Y)_{\cG_{\bar{\+ X}}}$.
\end{enumerate}
We have now shown that all properties of a hedge hold for $(\cH, \cH')$ with root set $V(\cF')$. Therefore, $(\cH, \cH')$ forms a $V(\cF')$-rooted hedge.
\end{proof}

To simplify the analysis and state the following results more straightforwardly, we first modify the graph pair returned by the ID algorithm. This leads to a graphical structure that we call an \textit{ID-hedge}. We next define an ID-hedge and establish some useful properties of this structure.

\begin{definition}[ID-hedge]
  Let $(\cC, \cC' )$ be a graph pair returned by the ID algorithm upon failure for $p(\+ y|\doo(\+ x))$ in a causal diagram $\cG$ and let $\cS$ and $\cS'$ be the graphs obtained from $\cC$ and $\cC'$, respectively, by removing all outgoing edges from nodes in $V(\cC')$. The graph pair $(\cS, \cS')$ is called an ID-hedge of $p(\+ y|\doo(\+ x))$ in $\cG$.
\end{definition}

Importantly, we can now state the root set of an ID-hedge.
\begin{lemrep} \label{lem:root-of-id-hedge}
  Let $(\cS, \cS')$ be an ID-hedge of $p(\+ y|\doo(\+ x))$ in $\cG$. Then the root set of both $\cS$ and $\cS'$ is $V(\cS')$.
\end{lemrep}
\begin{proof}
  By the hedge criterion, we can form a hedge from the edge subgraphs of $\cC$ and $\cC'$ that are returned by the ID algorithm. Therefore, the root sets of $\cC$ and $\cC'$ can consist only of nodes in $V(\cC')$. Since $\cS$ and $\cS'$ are obtained from $\cC$ and $\cC'$ by removing all outgoing edges of nodes in $V(\cC')$, the nodes in $V(\cC')$ have no children in $\cS$ and $\cS'$. Thus, the root set of both $\cS$ and $\cS'$ is $V(\cC') = V(\cS')$.
\end{proof}

By Lemma~\ref{lem:root-of-id-hedge}, both graphs of an ID-hedge have the same root set consisting of all nodes of the smaller graph. Next, we show that an ID-hedge contains a hedge with the same root set as the ID-hedge.

\begin{lemrep}
\label{lem:hedge-from-id}
  For any ID-hedge $(\cS, \cS')$ of $p(\+ y|\doo(\+ x))$ in $\cG$, there exist edge subgraphs $\cH \subseteq \cS$ and $\cH' \subseteq \cS'$ that form a $V(\cS')$-rooted hedge $(\cH, \cH')$ for $p(\+ y|\doo(\+ x))$ in $\cG$.
\end{lemrep}
\begin{proof}
   By the hedge criterion, we can find edge subgraphs $\cF \subseteq \cC$ and $\cF' \subseteq \cC'$ that form a hedge $(\cF, \cF')$ for $p(\+ y|\doo(\+ x))$. By Lemma~\ref{lem:hedge-from-hedge}, $(\cH, \cH')$ forms a $V(\cF')$-rooted hedge, where $\cH$ and $\cH'$ are obtained from $\cF$ and $\cF'$, respectively, by removing all outgoing edges from nodes in $V(\cF')$. Since $\cS$ and $\cS'$ are obtained from $\cC$ and $\cC'$, respectively, by removing all outgoing edges from nodes in $V(\cS') = V(\cF')$, it follows that $\cH$ and $\cH'$ are edge subgraphs of $\cS$ and $\cS'$, respectively.
\end{proof}

An ID-hedge can be systematically constructed from the graph pair returned by the ID algorithm upon failure and is guaranteed to contain a hedge with known root set. In the next section, we use these properties to determine which edges can be ID-enabling.

\section{Algorithmic Approach to Find All ID-enabling Edges}

We use ID-hedges to characterize the edges that can be expanded to make a nonidentifiable causal effect identifiable. Since nonidentifiability is characterized by the existence of a hedge, a successful extension must eliminate the relevant hedge structures responsible for nonidentifiability. As ID-hedges provide systematically constructible structures that are guaranteed to contain a hedge with a known root set, we study which edges can be expanded to ensure that no hedge with the same root set and the same two node sets remains within the ID-hedge. This leads to the notion of \textit{breaking an ID-hedge}, which we define next.

\begin{definition}[Breaking an ID-hedge] \label{def:breaking-id-hedge}
  An edge $e$ of $\cG$ breaks a $V(\cS')$-rooted ID-hedge $(\cS, \cS')$ of $p(\+ y|\doo(\+ x))$ if there are no edge subgraphs $\cF \subseteq \cS_{-e}$ and $\cF' \subseteq \cS'_{-e}$ that form a $V(\cS')$-rooted hedge $(\cF, \cF')$ for $p(\+ y|\doo(\+ x))$ in $\cG_e$.
\end{definition}

Note that we consider the graphs $\cS_{-e}$ and $\cS_{-e}'$ instead of $\cS_{e}$ and $\cS_{e}'$. Since the new node has no bidirected path to any other node, every C-component containing it is a singleton. The larger forest of a hedge must also contain an original treatment node, so it cannot contain the new node. Since the smaller forest is a subgraph of the larger forest, it cannot contain the new node either.

Breaking an ID-hedge provides a necessary condition for an edge to be ID-enabling. In particular, if an edge $e$ is ID-enabling, no $V(\cS')$-rooted hedge consisting of exactly the same nodes as an original ID-hedge can remain in $\cG_e$. The following theorem formalizes this connection between breaking ID-hedges and ID-enabling edges.

\begin{thmrep} \label{thm:successful-breaks}
  If $p(\+ y|\doo(\+ x))$ is not identifiable in $\cG$ but is identifiable in $\cG_e$ for some edge $e$ of $\cG$, then $e$ breaks every ID-hedge of $p(\+ y|\doo(\+ x))$ in $\cG$.
\end{thmrep}
\begin{proof}
  If $e$ does not break some ID-hedge, then, by Definition~\ref{def:breaking-id-hedge}, there exists a hedge for the causal effect $p(\+ y|\doo(\+ x))$ in $\cG_e$. Therefore, by Theorem~\ref{thm:hedge_criterion}, $p(\+ y|\doo(\+ x))$ is not identifiable in $\cG_e$.
\end{proof}

Definition~\ref{def:breaking-id-hedge} considers hedges for the causal effect of interest in the extended graph $\cG_e$, since our interest is in identification from $p(\+ v,n)$. However, we may simplify the subsequent analysis by considering the causal effect in the original graph $\cG$ instead. The following theorem shows this equivalence.

\begin{thmrep} \label{thm:breaking-id-hedge}
  An edge $e$ of $\cG$ breaks the $V(\cS')$-rooted ID-hedge $(\cS, \cS')$ of $p(\+ y|\doo(\+ x))$ in $\cG$ if and only if there are no edge subgraphs $\cF \subseteq \cS_{-e}$ and $\cF' \subseteq \cS'_{-e}$ that form a $V(\cS')$-rooted hedge $(\cF, \cF')$ for $p(\+ y|\doo(\+ x))$ in $\cG$.
\end{thmrep}
\begin{proof}
  Let $\cF$ and $\cF'$ be arbitrary edge subgraphs of $\cS_{-e}$ and $\cS'_{-e}$, respectively. Since neither $\cF$ nor $\cF'$ contains the edge $e$, both are subgraphs of $\cG_e$. This ensures that $(\cF,\cF')$ can be considered as a potential hedge in $\cG_e$, since a hedge in a graph must itself consist of subgraphs of that graph.

  We consider two cases: (i) $(\cF,\cF')$ forms a hedge in $\cG$, and (ii) $(\cF,\cF')$ does not form a hedge in $\cG$. We show that in both cases the same holds in $\cG_e$, establishing the desired equivalence.

  (i) If $(\cF, \cF')$ forms a hedge for the causal effect $p(\+ y|\doo(\+ x))$ in $\cG$, it satisfies the following properties:
  \begin{enumerate}
    \item $V(\cF) \cap \+ X \neq \emptyset$ and $V(\cF') \cap \+ X = \emptyset$.
    \item $\cF' \subseteq \cF$.
    \item $\cF$ and $\cF'$ are C-forests.
    \item $\cF$ and $\cF'$ have the same root set $\+ R$.
    \item $\mathbf{R} \subseteq \An(\+ Y)_{\cG_{\bar{\+ X}}}$.
  \end{enumerate}
  Properties 1--4 continue to hold in $\cG_e$, since they depend only on $\cF$ and $\cF'$, and neither $\cF$ nor $\cF'$ contains the edge $e$. Therefore, expanding $e$ does not affect these properties. It remains to show that Property 5 also holds.

  If $e=(A, V)$, then the path $A \rightarrow V$ is replaced by $A \rightarrow N \rightarrow V$ in $\cG_e$. Thus, expanding $e$ preserves all existing directed paths while introducing only the new node $N$ along the replaced edge. Similarly, if $e=A \leftrightarrow V$, then it is replaced by $N \rightarrow A$ and $N \rightarrow V$. This introduces two new directed edges but does not remove any existing directed edges or directed paths. Thus, in both cases, expanding $e$ can only increase the ancestor set of $\+ Y$ by the node $N$ when moving from $\cG_{\bar{\+ X}}$ to $\cG_{e, \bar{\+ X}}$. Hence, $\+ R \subseteq \An(\+ Y)_{\cG_{e, \bar{\+ X}}}$. Thus, all properties required for $(\cF, \cF')$ to form a hedge in $\cG$ continue to hold in $\cG_e$. Consequently, $(\cF, \cF')$ forms a hedge for the causal effect $p(\+ y|\doo(\+ x))$ in $\cG_e$.

  (ii) If $(\cF, \cF')$ does not form a hedge for the causal effect $p(\+ y| \doo(\+ x))$ in $\cG$, then at least one of Properties 1--5 does not hold. If one of Properties 1--4 does not hold in $\cG$, then it also does not hold in $\cG_e$, since expanding $e$ does not affect these properties as shown above. If Property 5 does not hold in $\cG$, then it does not hold in $\cG_e$ either. As shown above, expanding $e$ can only add the new node $N$ to the ancestor set of $\+ Y$ when moving from $\cG_{\bar{\+ X}}$ to $\cG_{e, \bar{\+ X}}$. Therefore, since $\+ R \not\subseteq \An(\+ Y)_{\cG_{\bar{\+ X}}}$ and $N \notin \+ R$, we also have $\+ R \not\subseteq \An(\+ Y)_{\cG_{e, \bar{\+ X}}}$. Thus, $(\cF, \cF')$ does not form a hedge for the causal effect $p(\+ y|\doo(\+ x))$ in $\cG_e$ either.

  Therefore, $(\cF, \cF')$ forms a hedge in $\cG$ if and only if it forms a hedge in $\cG_e$. Hence, the condition for breaking an ID-hedge is equivalent whether the hedge is considered in the original graph $\cG$ or in the extended graph $\cG_e$, which proves the claim.
\end{proof}

Unsurprisingly, it suffices to focus our attention on edges that are present in the hedges for the causal effect of interest.
\begin{correp} \label{thm:characterization}
  Let $e$ be an edge of $\cG$. Then $p(\+ y|\doo(\+ x))$ is identifiable in $\cG_e$ if and only if $e$ is in every hedge for $p(\+ y|\doo(\+ x))$ in $\cG$.
\end{correp}
\begin{proof}
  Expanding $e$ preserves ancestral relations among the original nodes, including after intervention on $\+ X$ (see the proof of Theorem~\ref{thm:breaking-id-hedge}). Moreover, $N$ belongs to no hedge, since it has no bidirected edges and the larger forest of a hedge is a C-forest containing a treatment node and at least one other node. Hence the hedges in $\cG_e$ are exactly the hedges in $\cG$ that do not contain $e$, so $\cG_e$ has no hedge if and only if every hedge in $\cG$ contains $e$. The claim then follows from Theorem~\ref{thm:hedge_criterion}.
\end{proof}

We next characterize the individual edges that break an ID-hedge. For a directed edge, the relevant property is whether removing the edge causes its source node to lose all of its children, thereby changing the root set of the ID-hedge. For a bidirected edge, the relevant property is whether removing the edge destroys the bidirected connectivity required for a C-component. Characterizing edges with these properties will allow us to restrict the set of potentially ID-enabling edges. Based on these properties, we define two types of edges that are relevant for breaking an ID-hedge: weak directed edges and weak bidirected edges.

\begin{definition}[Weak Directed Edge]
\label{def:weak-directed-edge}
  Let $\cI = (\cS, \cS')$ be an ID-hedge of $p(\+ y|\doo(\+ x))$ in $\cG$. A directed edge $e = (A, B)$ of $\cS$ is called a \emph{weak directed edge} of $\cI$ if $\mathrm{Ch}(A)_{\cS_{-e}} = \emptyset$. The set of weak directed edges of $\cI$ is denoted by $\cW_d(\cI)$.
\end{definition}

\begin{definition}[Weak Bidirected Edge]
\label{def:weak-bidirected-edge}
  Let $\cI = (\cS, \cS')$ be an ID-hedge of $p(\+ y|\doo(\+ x))$ in $\cG$. A bidirected edge $e$ is called a \emph{weak bidirected edge} of $\cI$ if $\cS_{-e}$ or $\cS_{-e}'$ is not bidirected-connected. The set of weak bidirected edges of $\cI$ is denoted by $\cW_b(\cI)$.
\end{definition}

{The following theorem shows that only these two types of edges can break an ID-hedge.}

\begin{thmrep} \label{thm:weak-edges}
  If an edge $e$ breaks a $V(\cS')$-rooted ID-hedge $(\cS, \cS')$ then $e$ is either a weak directed edge or a weak bidirected edge.
\end{thmrep}
\begin{proof}
If $e\notin E(\cS)$, deleting $e$ changes neither $\cS$ nor $\cS'$. The hedge supplied by Lemma~\ref{lem:hedge-from-id} remains, and Theorem~\ref{thm:breaking-id-hedge} implies that $e$ does not break the ID-hedge. Such an edge is neither weak directed nor weak bidirected. Hence, suppose $e\in E(\cS)$. We show that a non-weak edge cannot break the ID-hedge. We use the following notation: $\cG_{e \to e'}$ denotes the graph obtained from $\cG$ by removing the edge $e$ and adding a new edge $e'$. If $e \notin E(\cG)$, then $\cG_{e \to e'} = \cG$.

Suppose first that $e$ is a directed edge $(A, B)$ but it is not a weak directed edge. By Lemma~\ref{lem:hedge-from-id}, we have edge subgraphs $\cF \subseteq \cS$ and $\cF' \subseteq \cS'$ forming a $V(\cS')$-rooted hedge $( \cF, \cF' )$ for the causal effect $p(\+ y|\doo(\+ x))$ in $\cG$. Since $e$ is not a weak directed edge, $\Ch(A) \neq \emptyset$ in $\cS_{-e}$, and thus there exists an outgoing edge $e'$ from node $A$ in $\cS_{-e}$. Since $\cF$ is a $V(\cS')$-rooted C-forest, it contains one outgoing edge from $A$. If the outgoing edge of $A$ in $\cF$ is not $e$, then the edge subgraphs $\cF \subseteq \cS_{-e}$ and $\cF' \subseteq \cS'_{-e}$ still form a $V(\cS')$-rooted hedge $(\cF, \cF')$ for the causal effect $p(\+ y|\doo(\+ x))$ in $\cG$. If the outgoing edge of $A$ in $\cF$ is $e$, then edge subgraphs $\cF_{e \to e'} \subseteq \cS_{-e}$ and $\cF'_{e \to e'} \subseteq \cS'_{-e}$ form a $V(\cS')$-rooted hedge $(\cF_{e \to e'}, \cF'_{e \to e'} )$ for the causal effect $p(\+ y|\doo(\+ x))$ in $\cG$.

Suppose next that $e$ is a bidirected edge but it is not a weak bidirected edge. Lemma~\ref{lem:hedge-from-id}, we have edge subgraphs $\cF \subseteq \cS$ and $\cF' \subseteq \cS'$ forming a $V(\cS')$-rooted hedge $( \cF, \cF' )$ for the causal effect $p(\+ y|\doo(\+ x))$ in $\cG$. If $\cS_{-e}$ and $\cS_{-e}'$ are bidirected-connected, we can construct edge subgraphs $\cH \subseteq \cS_{-e}$ and $\cH' \subseteq \cS_{-e}'$. The graphs $\cH$ and $\cH'$ have the same directed edges as in $\cF$ and $\cF'$, respectively, and the same bidirected edges as in $\cS_{-e}$ and $\cS_{-e}'$, respectively. Thus, $(\cH, \cH')$ forms a $V(\cS')$-rooted hedge for the causal effect $p(\+ y|\doo(\+ x))$ in $\cG$.

\end{proof}

\begin{algorithm}[t]
\caption{Hedge-Collecting Identification Algorithm (HCID). The inputs are node sets \(\+ X\) and \(\+ Y\), representing the intervention and outcome variables, respectively, and an ADMG $\cG$. The output is a set of graph pairs obtained from encountered ID failures. The collection of node sets of the maximal C-components of $\cG$ is denoted by $C(\cG)$.}
\label{alg:hcid}
\begin{algorithmic}[1]

\Function{HCID}{\(\+ Y, \+ X, \cG\)}

\State $\+ V \gets V(\cG)$
\If{\(\+ X=\emptyset\)}
  \State \textbf{return} \(\emptyset\).
\EndIf

\If{\(\+ V \neq \An(\+ Y)_\cG\)}
  \State \textbf{return} \(\operatorname{HCID}(\+ Y, \+ X \cap \An(\+ Y)_\cG, \cG[\An(\+ Y)_\cG])\).
\EndIf

\State Let \(\+ W = (\+ V \setminus \+ X) \setminus \An(\+ Y)_{\cG_{\bar{\+ X}}}\). If \(\+ W \neq \emptyset\),
\Statex \quad \textbf{return} \(\operatorname{HCID}(\+ Y, \+X \cup \+ W, \cG)\).

\If{\(C(\cG[\+ V \setminus \mathbf{X}]) = \{\+ S_1,\ldots,\+ S_k\}\), \(k\geq 2\)} \label{step:factorization}
  \Statex \quad \textbf{return} \(\bigcup_{i=1}^k \operatorname{HCID}(\+ S_i, \+ V \setminus \+ S_i, \cG)\).
\EndIf

\If{\(C(\cG[\+ V \setminus \mathbf{X}]) = \{\+ S\}\)}
  \If{\(C(\cG)=\{\+ V\}\)}
    \State \textbf{return} \(\{(\cG, \cG[\+ S])\}\).
  \EndIf
  \If{\(\+ S \in C(\cG)\)}
    \State \textbf{return} \(\emptyset\).
  \EndIf
  \If{\((\exists \+ S')\, \+ S \subset \+ S'\in C(\cG)\)}
    \State \textbf{return} \(\operatorname{HCID}(\+ Y, \+ X \cap \+ S', \cG[\+ S'])\).
  \EndIf
\EndIf
\EndFunction
\end{algorithmic}
\end{algorithm}

We have now developed a theoretical framework for reducing the set of potentially ID-enabling edges by focusing on weak directed and weak bidirected edges that can break an ID-hedge. The next step is to obtain all ID-hedges for which we leverage the ID algorithm. In step~4 of the standard ID algorithm, the graph is decomposed into its C-components and a recursive ID call is made for each component if there are at least two C-components. Since the C-components are not ordered, there is no fixed order in which the corresponding recursive branches must be evaluated. Consequently, when failures occur in multiple branches, different orderings may cause different failing branches to be encountered and propagated first. This provides a way to obtain multiple graph pairs associated with ID failures.

To exploit this observation, we introduce a modified version of the ID algorithm, which we call the Hedge-Collecting Identification algorithm (HCID, Algorithm~\ref{alg:hcid}). HCID preserves the graphical recursion of the standard ID algorithm but modifies its failure-handling behavior. Whereas a failure in the standard ID algorithm propagates through the recursive calls and terminates the identification procedure, a failing branch in HCID returns a singleton set containing the corresponding graph pair. A branch that terminates without failure returns the empty set, and when the recursion branches into multiple C-components on line~\ref{step:factorization}, HCID returns the union of the graph-pair sets obtained from the individual branches. Thus, a failure in one branch does not prevent the remaining branches from being processed.

As in the standard ID setting, HCID assumes that the observational joint distribution is known. However, HCID does not explicitly require this distribution as an input, since it does not construct the identifying functional. The parts of the standard ID algorithm concerned solely with constructing the identifying functional are therefore omitted. Instead, HCID records the graph pair associated with each encountered failure condition. Consequently, if $h$ failure conditions are encountered, HCID returns the collection $\mathcal{P} = \{(\cG_1,\cG'_1),\ldots,(\cG_h,\cG'_h)\},$ where $\mathcal{P}$ is empty if and only if the causal effect is identifiable in $\cG$.

The construction of HCID is based on the idea that each graph pair it returns corresponds to a failure that could be encountered by the standard ID algorithm under a suitable ordering of its recursive branches. We formalize this connection in the following theorem. This ensures that the graph pairs collected by HCID have the same interpretation as failure pairs returned by the standard ID algorithm.

\begin{lemrep} \label{lem:ordering}
For every graph pair $(\cC_i,\cC'_i)$ returned by HCID, there exists an ordering of the recursive branches of the standard ID algorithm under which ID returns $(\cC_i,\cC'_i)$ as its failure pair.
\end{lemrep}

\begin{proof}
Let $(\cC_i,\cC'_i) \in \mathcal{P}$ be an arbitrary graph pair returned by HCID. The pair is returned when a recursive branch reaches the same failure condition as in the standard ID algorithm. Furthermore, HCID follows the same graphical recursion as the standard ID algorithm within each recursive branch.

The standard ID algorithm decomposes the graph into its C-components and makes a recursive call for each of them. However, the algorithm does not specify an ordering of these C-components. Therefore, whenever multiple C-components are encountered on the recursive path leading to $(\cC_i,\cC'_i)$, the C-component containing the continuation of this path may be considered first.

Choosing this ordering at every such branching step causes the standard ID algorithm to follow the same sequence of recursive calls as HCID along the path leading to $(\cC_i,\cC'_i)$. The recursion therefore reaches the same failure condition, and the resulting failure is propagated by the standard ID algorithm. Hence, ID returns $(\cC_i,\cC'_i)$ as its failure pair.

Since $i$ was arbitrary, the result holds for every graph pair in $\mathcal{P}$.
\end{proof}

Since every graph pair in $\mathcal{P}$ can also be returned by the standard ID algorithm under an appropriate ordering of its recursive branches, the ID-hedge construction introduced previously can be applied to each graph pair in $\mathcal{P}$. Consequently, instead of constructing an ID-hedge from only the first failure encountered by ID, the graph pairs collected by HCID can be used to obtain multiple ID-hedges.

For each resulting ID-hedge $\cI$, let $\+S_i$ denote the set of edges that break $\cI$. If an edge $e$ is ID-enabling, then $e$ must break every ID-hedge obtained from the graph pairs in $\mathcal{P}$. Therefore, any such edge must satisfy
$e \in \bigcap_{i=1}^{h} \+ S_i$. Hence, the set $\+ C = \bigcap_{i=1}^{h} \+ S_i$ forms a set of candidate edges containing all potentially ID-enabling edges. Edges outside $\+ C$ cannot render the causal effect identifiable, since at least one of the collected ID-hedges would remain unbroken in the corresponding extended graph. The extended graphs corresponding to edges in $\+ C$ can then be tested individually for identifiability by the ID algorithm.

We now formulate an algorithm for finding all ID-enabling edges as Algorithm~\ref{alg:successful-edges}. The algorithm collects ID-hedges with HCID and intersects their weak edges, yielding at most $2n-2$ candidates---and, as we show below, this bound is sharp (Theorem~\ref{thm:measurement-complexity}). It then tests each candidate's expansion with ID. Note that this is not the only possible algorithm, and that different necessary conditions can be combined to restrict the search without affecting correctness (e.g., Theorem~\ref{thm:one-bidirected-path}). Further necessary conditions can reduce the candidate set, while fewer conditions preserve correctness but may require more ID tests. Appendix~\ref{app:variants} examines several variants.
\par

\begin{algorithm}[t]
\caption{Algorithm for finding ID-enabling edges. The inputs are an ADMG $\cG$ in which the causal effect $p(\+ y|\doo(\+ x))$ is not identifiable, an intervention set $\+ X$, and an outcome set $\+ Y$. The output is the set of ID-enabling edges $\+ S$ such that $p(\+ y|\doo(\+ x))$ is identifiable in $\cG_e$ for all $e \in \+ S$. $\Call{ID}{}$ denotes a variation of the ID algorithm that omits the construction of the identifying functional analogous to $\Call{HCID}{}$.}
\label{alg:successful-edges}
\begin{algorithmic}[1]
\Function{ID-EnablingEdges}{\(\+ Y, \+ X, \cG\)}
\State $\cP \gets \Call{HCID}{\mathbf{Y},\mathbf{X}, \cG}$ \label{step:hcid}
\For{$(\cC_i,\cC'_i) \in \cP$}
  \State $\+ Z \gets V(\cC'_i)$ \label{step:smaller_nodes}
  \State $(\cC_i, \cC'_i) \gets (\cC_{i, \underline{\+ Z}}, \cC'_{i, \underline{\+ Z}})$ \label{step:remove_incoming}
\EndFor
\State $\+ C \gets \displaystyle\bigcap_{i=1}^{|\cP|} (\cW_d(\cC_i,\cC'_i) \cup \cW_b(\cC_i,\cC'_i))$ \label{step:intersect}
\State $\+ S \gets \emptyset$
\For{$e \in \+ C$}
    \If{$\Call{ID}{\+ Y, \+ X, \cG_e}$ does not fail} \label{step:id_check}
        \State $\+ S \gets \+ S \cup \{e\}$
    \EndIf
\EndFor
\State \Return $\+ S$ \label{step:successful}
\EndFunction
\end{algorithmic}
\end{algorithm}

\begin{thmrep}[Soundness of Algorithm~\ref{alg:successful-edges}]
  Let $\cG$ be a causal diagram such that $p(\+ y|\doo(\+ x))$ is not identifiable in $\cG$ and let $\+ S = \Call{ID-EnablingEdges}{\+ Y, \+ X, \cG}$. Then $p(\+ y|\doo(\+x))$ is identifiable in $\cG_e$ for every $e \in \+ S$.
\end{thmrep}
\begin{proof}
  The identifiability of $p(\+ y|\doo(\+ x))$ in $\cG_e$ is verified by the ID algorithm for each member $e \in \+ S$ on line~\ref{step:id_check}. The claim now follows from the soundness of ID.
\end{proof}

\begin{thmrep}[Completeness of Algorithm~\ref{alg:successful-edges}]
\label{thm:successful-edges-completeness}
  Let $\cG$ be a causal diagram such that $p(\+ y|\doo(\+ x))$ is not identifiable in $\cG$ and let $\+ S = \Call{ID-EnablingEdges}{\+ Y, \+ X, \cG}$. If there exists an edge $e$ of $\cG$ such that $p(\+ y|\doo(\+ x))$ is identifiable in $\cG_e$, then $e \in \+ S$.
\end{thmrep}
\begin{proof}
  Let $e$ be any ID-enabling edge. Since the original effect is not identifiable, HCID returns at least one pair, and every pair it collects can also be returned by ID under a suitable ordering of the branches, as shown by Lemma~\ref{lem:ordering}. Each corresponding pair $(\cS_i,\cS_i')$ is therefore an ID-hedge for the original effect. By Theorem~\ref{thm:successful-breaks}, $e$ must break each such ID-hedge, and by Theorem~\ref{thm:weak-edges}, $e \in \cW_d(\cS_i,\cS_i') \cup \cW_b(\cS_i,\cS_i')$ for every $i$. A directed edge can only belong to the directed sets and a bidirected edge can only belong to the bidirected sets. Thus $e$ belongs to the candidate set $\+ C$ in the algorithm. Completeness of ID ensures that its final test succeeds, so the algorithm returns $e$.
\end{proof}

We demonstrate the operation of Algorithm~\ref{alg:successful-edges} via an
example. Consider the graph $\cG$ in Figure~\ref{fig:running-example}(b)
where $p(y|\doo(x))$ is not identifiable. Note that none of the conditions introduced in Section~\ref{sec:graphical-criteria} reduce the number of edges to be tested.
Calling $\Call{HCID}{Y,X,\cG}$ on line~\ref{step:hcid} produces three graph
pairs: $(\cG[\{Z,X,A\}], \cG[A])$, $(\cG[\{Z,X,B\}], \cG[B])$, and
$(\cG, \cG[Y])$. The corresponding root sets are $\{A\}$, $\{B\}$, and
$\{Y\}$. As each root has no children in either graph of its pair, no
changes are made on line~\ref{step:remove_incoming}.
Next, the set of candidates $\+ C$ is constructed on
line~\ref{step:intersect}. For the first two pairs, the common weak edges
are $(Z,X)$ and $\{Z,X\}$. For the third pair, the weak directed edges are
$(Z,X)$, $(A,Y)$, and $(B,Y)$, and the weak bidirected edges are
$\{Z,A\}$ and $\{Z,B\}$. This reduces the set to a single directed edge $\+ C = \{(Z,X)\}$.
The ID test on line~\ref{step:id_check} determines that $p(y|\doo(x))$
is identifiable after expanding $(Z,X)$, so
$\+ S = \+ C = \{(Z,X)\}$ on line~\ref{step:successful}.
Appendix~\ref{app:algorithm-walkthrough} illustrates the operation of Algorithm~\ref{alg:successful-edges} on  Figure~\ref{fig:running-example}(a).

The computational complexity of Algorithm~\ref{alg:successful-edges} is outlined by the following theorem.

\begin{thmrep}[Complexity of Algorithm~\ref{alg:successful-edges}]
\label{thm:measurement-complexity}
Let $n=|\+ V|$, $m=|\+ D|+|\+ B|$, and let $q$ be the number of candidates selected by Algorithm~\ref{alg:successful-edges}. Its output $\+ S$ satisfies $|\+ S|\le q\le\min\{m,2n-2\}$. The bound $2n-2$ is sharp for both candidates and ID-enabling edges. With adjacency lists and graphical ID decisions, the algorithm runs in $O((q+1)n^2(n+m))$, and hence $O(n^3(n+m))$, time. Processing each pair as it is returned permits $O(n+m)$ working space.
\end{thmrep}
\begin{proof}
We first bound the number of candidates. Consider any ID-hedge $(\cS,\cS')$ constructed by Algorithm~\ref{alg:successful-edges}, which exists because the input query is not identifiable. By Lemma~\ref{lem:root-of-id-hedge} and Definition~\ref{def:weak-directed-edge}, there are at most $n-1$ weak directed edges. Similarly, by Definition~\ref{def:weak-bidirected-edge}, the weak bidirected edges cannot form a cycle, because $\cS'$ is a subgraph of $\cS$. There are therefore at most $n-1$ weak bidirected edges.

Algorithm~\ref{alg:successful-edges} selects a subset of these weak edges. Since its output is a subset of the candidates,
\[
|\+ S|\le q\le \min\{m,2n-2\}.
\]

To prove sharpness, for $n\ge2$, consider the graph consisting of the directed chain
\[
Z_1\rightarrow \cdots \rightarrow Z_{n-2}\rightarrow X\rightarrow Y
\]
and the bidirected path
\[
X\leftrightarrow Z_1\leftrightarrow\cdots
\leftrightarrow Z_{n-2}\leftrightarrow Y,
\]
where the sequence of $Z$ nodes is empty when $n=2$. The full graph and its singleton subgraph on $Y$ form a hedge, so the query $p(y|\doo(x))$ is not identifiable by Theorem~\ref{thm:hedge_criterion}.

Every hedge for this query has root set $\{Y\}$, since $\An(Y)_{\cG_{\bar{X}}}=\{X,Y\}$ and the smaller C-forest excludes $X$. Its larger C-forest must therefore contain the entire bidirected path connecting $X$ to $Y$, and the entire directed chain leading to the root $Y$. Thus, every edge belongs to every hedge. By Corollary~\ref{thm:characterization}, all $2n-2$ edges are ID-enabling, and Theorem~\ref{thm:successful-edges-completeness} ensures that Algorithm~\ref{alg:successful-edges} returns them all. This establishes that the bound is sharp for both the number of candidates and the number of ID-enabling edges.

We now bound the running time. HCID branches only over disjoint C-components, so it has at most $n$ terminal branches. Each recursive call along a branch either removes nodes or adds them to the intervention set, so a branch contains $O(n)$ calls, each taking $O(n+m)$ time. HCID therefore runs in $O(n^2(n+m))$ time and returns at most $n$ graph pairs. The same bound holds for ID when the identifying functional is not constructed.

We compute the candidate intersection one ID-hedge at a time. By Definitions~\ref{def:weak-directed-edge} and~\ref{def:weak-bidirected-edge}, we can check whether an edge is weak in $O(n+m)$ time. The first ID-hedge requires at most $m$ such checks, and by the bound above, each subsequent ID-hedge requires at most $2n-2$. Since there are at most $n$ ID-hedges and $m\le n(n-1)$, constructing the ID-hedges and intersecting their weak edges takes $O(n^2(n+m))$ time.

By Definition~\ref{def:extended_graph}, each extended graph has $n+1$ nodes and $m+1$ edges. Each of the $q$ final ID tests therefore takes $O(n^2(n+m))$ time. Adding these costs gives $O((q+1)n^2(n+m))$. Since $q\le 2n-2$, this is $O(n^3(n+m))$.

For the space bound, process the C-components one at a time and replace the current graph at each reduction. Their disjoint node sets require $O(n)$ space. We keep the candidates that are weak edges of the current ID-hedge, then remove it from memory. The extended graphs are then tested one at a time. At any point, we need only store a constant number of graphs and the arrays used for searches, which require $O(n+m)$ space. Since the candidate and output sets use $O(n)$ space, the total space needed is $O(n+m)$.
\end{proof}

\begin{toappendix}
\section{Algorithm walkthrough: Figure~\ref{fig:running-example}(a)}
\label{app:algorithm-walkthrough}

Consider the graph $\cG$ in Figure~\ref{fig:running-example}(a) where $p(y|\doo(x))$ is not identifiable. Calling $\Call{HCID}{Y,X,\cG}$ on line~\ref{step:hcid} produces two graph pairs: $(\cG, \cG[Y])$ and $(\cG[\{X,Z,W\}], \cG[\{Z, W\}])$. For the first pair, we have $\+ Z = \{Y\}$ on line~\ref{step:smaller_nodes}. As $Y$ has no children in either graph of this pair, no changes to the graphs are made on line~\ref{step:remove_incoming}. For the second pair, we have that $\+ Z = \{Z, W\}$, so the directed edge $(Z, W)$ is removed from both graphs of the pair.
Next, the set of candidates $\+ C$ is constructed on line~\ref{step:intersect}. For the first pair, $X$ and $W$ both only have one child in $\cG$, so the weak directed edges are $(X, W)$ and $(W, Y)$. All bidirected edges of $\cG$ are weak, i.e., $\{X, Z\}, \{X, Y\}$ and $\{Z, W\}$, as $\cG$ is no longer bidirected connected if any such edge is removed. For the second pair, the only weak directed edge is $(X, W)$ and the weak bidirected edges are $\{X, Z\}$ and $\{Z, W\}$. Thus $\+ C = \{(X, W), \{X, Z\}, \{Z, W\}\}.$ The ID test on line~\ref{step:id_check} determines that $p(y|\doo(x))$ is identifiable in the extended graph of all three edges, so $\+ S = \+ C$ in this instance on line~\ref{step:successful}.

\end{toappendix}

\section{Computational experiments}
\label{sec:computational-experiments}
In this section, we first study how often expanding a single edge enables identification through exhaustive experiments on small graphs. Next, we evaluate the coverage of the graphical criteria of Section~\ref{sec:graphical-criteria} on these graphs and as well as on random samples of larger graphs. Finally, we compare the runtime of Algorithm~\ref{alg:successful-edges} against expanding and testing every edge. Details of the experiments are provided in Appendix~\ref{app:computational-experiments}.

\subsection{Identification via edge expansion}

\begin{table}[t]
\centering\small
\setlength{\tabcolsep}{3pt}
\renewcommand{\arraystretch}{1.12}
\begin{tabular*}{\linewidth}{@{\extracolsep{\fill}}rrrrrr@{}}
\hline
$n$ & Non-ID & \multicolumn{4}{c}{ID-enabling (\%)} \\
\cline{3-6}
 & & None & D & B & Both \\
\hline
2 & 1 & 0.0 & 0.0 & 0.0 & 100.0 \\
3 & 26 & 7.7 & 15.4 & 7.7 & 69.2 \\
4 & 2{,}010 & 19.3 & 24.2 & 11.3 & 45.2 \\
5 & 513{,}602 & 32.2 & 29.0 & 11.3 & 27.4 \\
6 & 468{,}747{,}580 & 44.8 & 31.0 & 9.1 & 15.2 \\
\hline
\end{tabular*}
\caption{ID-enabling edge expansions among nonidentified queries (\%). D: directed only; B: bidirected only; Both: each type succeeds in a separate expansion.
}
\label{tab:exhaustive-results}
\end{table}

We consider every case in which $p(y|\doo(x))$ is not identifiable in ADMGs with two through six nodes. We consider every choice of distinct $X$ and $Y$ such that all nodes are ancestors of $Y$, up to relabeling that preserves the treatment and outcome. These experiments cover more than $469$ million nonidentified cases.

Table~\ref{tab:exhaustive-results} shows how often identification can be enabled by expanding a directed edge, a bidirected edge, either type, or neither. Expanding a single edge enables identification in $100\%$ of cases with two nodes, $92.3\%$ with three, $80.7\%$ with four, $67.8\%$ with five, and $55.2\%$ with six. Thus, a single expansion suffices in more than half of the cases at every size considered.

At six nodes, $31.0\%$ of cases can be solved only by expanding a directed edge, $9.1\%$ only by expanding a bidirected edge, and $15.2\%$ by expanding an edge of either type. These results illustrate the value of considering both causal mechanisms and sources of confounding when seeking additional observations.

With two variables, successful expansions yield exactly two identification strategies: backdoor and frontdoor adjustment, as depicted in Figure~\ref{fig:simple_example}. With three variables, successful expansions yield four strategies: backdoor, frontdoor, the Napkin (trapdoor) strategy \citep{pearl:2018,helske2021trapdoor,guo2025napkin}, and a fourth strategy based on frontdoor identification of an intermediate effect---see Figure~\ref{fig:identification-strategies}.

\subsection{Coverage of the graphical criteria}
\label{sec:criteria-coverage}
We evaluate how often Theorems~\ref{thm:bidirected_paths}, \ref{thm:directed_child}, and~\ref{thm:one-bidirected-path} fully resolve a case by determining all edges---if any---that are ID-enabling. We also measure how often Theorem~\ref{thm:bidirected_paths} or~\ref{thm:directed_child} finds an ID-enabling edge whenever one exists. We apply these theorems to the exhaustive cases and to random samples of sparse and dense graphs with $n\in\{10,20,40,60,80,160\}$ nodes. Directed and bidirected edge probabilities are $4/n$ and $2/n$ for sparse graphs, and $0.30$ and $0.20$ for dense graphs. To consider different edge densities, we further examine $25$ combinations of directed and bidirected edge probabilities at $n=40$.

In the exhaustive experiments, Theorems~\ref{thm:bidirected_paths}, \ref{thm:directed_child}, and~\ref{thm:one-bidirected-path} fully resolve $90$--$100\%$ of cases, and Theorems~\ref{thm:bidirected_paths} and~\ref{thm:directed_child} find an ID-enabling edge in at least $99.96\%$ of cases whenever a solution exists. Across all settings for larger graphs, the corresponding percentages are $87.70$--$100\%$ and at least $99.92\%$. Detailed results are given in Tables~\ref{tab:criteria-coverage} and~\ref{tab:density-coverage} of Appendix~\ref{app:computational-experiments}.

\begin{figure}[t]
\centering
\begin{subfigure}[t]{0.48\linewidth}
\centering
\begin{tikzpicture}[every label/.append style={text=black}]
  \path[use as bounding box] (-0.25,-0.65) rectangle (2.8,1.2);
  \node[obs={Z}] at (0,0) {};
  \node[obs={N},draw=blue] at (0.85,0) {};
  \node[obs={X}] at (1.7,0) {};
  \node[obs={Y}] at (2.55,0) {};
  \draw[causal,blue] (Z) -- (N);
  \draw[causal,blue] (N) -- (X);
  \draw[causal] (X) -- (Y);
  \draw[bidirected,bend right=45] (Z) to (X);
  \draw[bidirected,bend left=45] (Z) to (Y);
\end{tikzpicture}
\caption{Napkin (trapdoor)}
\end{subfigure}\hfill
\begin{subfigure}[t]{0.48\linewidth}
\centering
\begin{tikzpicture}[every label/.append style={text=black}]
  \path[use as bounding box] (-0.25,-0.65) rectangle (2.8,1.2);
  \node[obs={X}] at (0,0) {};
  \node[obs={N},draw=blue] at (0.85,0) {};
  \node[obs={Z}] at (1.7,0) {};
  \node[obs={Y}] at (2.55,0) {};
  \draw[causal,blue] (X) -- (N);
  \draw[causal,blue] (N) -- (Z);
  \draw[causal] (Z) -- (Y);
  \draw[causal,bend left=45] (X) to (Y);
  \draw[bidirected,bend right=45] (X) to (Z);
\end{tikzpicture}
\caption{Intermediate frontdoor}
\end{subfigure}
\caption{Representative expanded graphs for the two additional identification strategies in 3-variable models. The newly observed variable $N$ and the edges introduced by expansion are highlighted in blue.
}
\label{fig:identification-strategies}
\end{figure}

\subsection{Runtime comparisons}
Finally, using the same samples of larger graphs, we compare the runtime of Algorithm~\ref{alg:successful-edges} with the brute force procedure of expanding and testing every edge. For each case, speedup is the runtime of the brute force procedure divided by that of Algorithm~\ref{alg:successful-edges}.

Figure~\ref{fig:measurement-runtime} shows the runtime median and interquartile range for sparse and dense graphs. At $160$ nodes, median speedups are $5.7$ for sparse graphs and $1{,}896.0$ for dense graphs. Among cases where an ID-enabling edge exists, the corresponding speedups are $5.7$ and $1{,}348.1$. Results  across the $25$ combinations of edge probabilities at $n=40$ are similar and defered to the appendix.

\begin{toappendix}
\section{Computational experiments}
\label{app:computational-experiments}

\subsection{Exhaustive experiments}
\label{app:exhaustive-experiments}
\label{app:criteria-coverage}
We start with a set of exhaustive experiments, covering every case in which $p(y|\doo(x))$ is not identifiable in ADMGs with two through six nodes. We consider every choice of distinct $X$ and $Y$ such that all nodes are ancestors of $Y$, up to relabeling that preserves the treatment and the outcome.

For each of these cases, we run Algorithm~\ref{alg:successful-edges} and compare the edges it returns with those found by a brute force procedure that expands each edge separately and applies ID to the resulting graph. As expected, both procedures return the same set of edges, corroborating the correctness of our algorithm.

We also count the number of cases where identification can be enabled only by expanding a directed edge, only by expanding a bidirected edge, by either type, or by neither type. These results are shown in Table~\ref{tab:exp-opportunities}. Expanding a single edge enables identification in $100\%$ of the cases with two nodes, $92.3\%$ with three, $80.7\%$ with four, $67.8\%$ with five, and $55.2\%$ with six. At six nodes, $31.0\%$ of cases can be solved only by expanding a directed edge, $9.1\%$ only by expanding a bidirected edge, and $15.2\%$ by expanding an edge of either type.

\begin{table}[htbp]
\centering\small
\setlength{\tabcolsep}{4pt}
\begin{tabular*}{\linewidth}{@{\extracolsep{\fill}}rrrrrrr@{}}
\hline
$n$ & Total queries & Nonidentified & Neither & Directed only & Bidirected only & Both \\
\hline
2 & 2 & 1 & 0 & 0 & 0 & 1 \\
3 & 40 & 26 & 2 & 4 & 2 & 18 \\
4 & 2{,}568 & 2{,}010 & 387 & 487 & 227 & 909 \\
5 & 581{,}424 & 513{,}602 & 165{,}443 & 149{,}132 & 58{,}282 & 140{,}745 \\
6 & 496{,}583{,}784 & 468{,}747{,}580 & 209{,}916{,}840 & 145{,}086{,}131 & 42{,}537{,}403 & 71{,}207{,}206 \\
\hline
\end{tabular*}
\caption{Exhaustive classification of nonidentified queries. ``Both'' means that an edge of each type succeeds when expanded individually. The number of nodes $n$ refers to the original graph.}
\label{tab:exp-opportunities}
\end{table}

We also apply Theorems~\ref{thm:bidirected_paths}, \ref{thm:directed_child}, and~\ref{thm:one-bidirected-path} to the same cases, to examine the coverage of the graphical criteria developed in Section~\ref{sec:graphical-criteria}. We measure how often these theorems fully resolve a case by determining all edges---if any---that are ID-enabling. We also measure how often Theorem~\ref{thm:bidirected_paths} or~\ref{thm:directed_child} finds an ID-enabling edge whenever one exists. Table~\ref{tab:criteria-coverage} reports the results. Theorems~\ref{thm:bidirected_paths}, \ref{thm:directed_child}, and~\ref{thm:one-bidirected-path} fully resolve $90$--$100\%$ of the nonidentified cases, and Theorems~\ref{thm:bidirected_paths} and~\ref{thm:directed_child} find an ID-enabling edge in at least $99.96\%$ of cases, whenever a solution exists.

\begin{table}[htbp]
\centering
\setlength{\tabcolsep}{6pt}
\renewcommand{\arraystretch}{1.12}
\begin{tabular*}{\linewidth}{@{}r@{\extracolsep{\fill}}rrrr@{}}
\toprule
Nodes & \shortstack{Nonidentified\\queries} & \shortstack{Queries admitting\\an expansion} & \shortstack{Fully\\resolved (\%)} & \shortstack{Found an\\expansion (\%)} \\
\midrule
\multicolumn{5}{@{}l}{\emph{Exhaustive experiments}} \\
2 & 1 & 1 & 100.00 & 100.00 \\
3 & 26 & 24 & 92.31 & 100.00 \\
4 & 2{,}010 & 1{,}623 & 90.00 & 100.00 \\
5 & 513{,}602 & 348{,}159 & 90.70 & 99.97 \\
6 & 468{,}747{,}580 & 258{,}830{,}740 & 92.83 & 99.96 \\
\addlinespace[4pt]
\multicolumn{5}{@{}l}{\emph{Sparse graphs}} \\
10 & 10{,}000 & 7{,}791 & 94.51 [94.05, 94.94] & 99.99 [99.93, 100.00] \\
20 & 10{,}000 & 8{,}492 & 93.97 [93.49, 94.42] & 99.94 [99.86, 99.97] \\
40 & 10{,}000 & 9{,}226 & 95.67 [95.25, 96.05] & 99.92 [99.84, 99.96] \\
60 & 10{,}000 & 9{,}518 & 96.67 [96.30, 97.00] & 99.97 [99.91, 99.99] \\
80 & 10{,}000 & 9{,}652 & 97.54 [97.22, 97.83] & 99.97 [99.91, 99.99] \\
160 & 10{,}000 & 9{,}907 & 99.09 [98.88, 99.26] & 100.00 [99.96, 100.00] \\
\addlinespace[4pt]
\multicolumn{5}{@{}l}{\emph{Dense graphs}} \\
10 & 10{,}000 & 8{,}690 & 92.74 [92.21, 93.23] & 99.98 [99.92, 99.99] \\
20 & 10{,}000 & 4{,}627 & 96.56 [96.18, 96.90] & 100.00 [99.92, 100.00] \\
40 & 10{,}000 & 1{,}466 & 99.96 [99.90, 99.98] & 100.00 [99.74, 100.00] \\
60 & 10{,}000 & 858 & 100.00 [99.96, 100.00] & 100.00 [99.55, 100.00] \\
80 & 10{,}000 & 666 & 100.00 [99.96, 100.00] & 100.00 [99.43, 100.00] \\
160 & 10{,}000 & 320 & 100.00 [99.96, 100.00] & 100.00 [98.81, 100.00] \\
\addlinespace[4pt]
\bottomrule
\end{tabular*}
\caption{Coverage of the graphical criteria. Brackets give $95\%$ Wilson confidence intervals.}
\label{tab:criteria-coverage}
\end{table}

\subsection{Sparse and dense graphs}
\label{app:random-graph-design}
We draw random samples of graphs with $n\in\{10,20,40,60,80,160\}$ nodes. For sparse graphs, each directed edge is included with probability $4/n$ and each bidirected edge with probability $2/n$. For dense graphs, these probabilities are $0.30$ and $0.20$. Directed edges follow a fixed topological order, with the last node always serving as the outcome $Y$. The treatment $X$ is then drawn uniformly from the nodes with a directed path to $Y$. For each size and regime, we sample $10{,}000$ graphs in which the target query $p(y|\doo(x))$ is not identifiable.

Table~\ref{tab:criteria-coverage} shows the coverage of the graphical criteria in Section~\ref{sec:graphical-criteria}. Note that Theorems~\ref{thm:bidirected_paths}, \ref{thm:directed_child}, and~\ref{thm:one-bidirected-path} fully resolve $92.74$--$100\%$ of the nonidentified cases. Moreover, Theorems~\ref{thm:bidirected_paths} and~\ref{thm:directed_child} find an ID-enabling edge in at least $99.92\%$ of cases, whenever a solution exists.

We also compare the runtime of Algorithm~\ref{alg:successful-edges} against the naive baseline of expanding and testing every edge. For each case, we record the median runtime across seven timing batches. Speedup is the ratio of the runtime of the brute force procedure to that of Algorithm~\ref{alg:successful-edges}.

The results are shown in Table~\ref{tab:exp-scaling} and Figure~\ref{fig:measurement-runtime}. Across graph sizes, median speedups range from $2.97$ to $5.69$ for sparse graphs and from $2.21$ to $1{,}895.97$ for dense graphs. Figure~\ref{fig:exp-paired} compares the runtimes for individual cases.

\begin{table}[htbp]
\centering\small
\setlength{\tabcolsep}{4pt}
\renewcommand{\arraystretch}{1.12}
\begin{tabular*}{\linewidth}{@{}r@{\extracolsep{\fill}}rrrrrr@{}}
\toprule
 & \multicolumn{3}{c}{Sparse graphs} & \multicolumn{3}{c}{Dense graphs} \\
\cmidrule(lr){2-4}\cmidrule(l){5-7}
 & & \multicolumn{2}{c}{Speedup} & & \multicolumn{2}{c}{Speedup} \\
$n$ & ID-enabling (\%) & All & ID-enabling & ID-enabling (\%) & All & ID-enabling \\
\midrule
10 & 77.91 & 2.97 & 2.65 & 86.90 & 2.21 & 2.05 \\
20 & 84.92 & 3.37 & 3.11 & 46.27 & 12.89 & 8.63 \\
40 & 92.26 & 3.81 & 3.66 & 14.66 & 299.83 & 122.59 \\
60 & 95.18 & 4.10 & 4.00 & 8.58 & 649.16 & 322.04 \\
80 & 96.52 & 4.50 & 4.42 & 6.66 & 758.02 & 467.87 \\
160 & 99.07 & 5.69 & 5.66 & 3.20 & 1{,}895.97 & 1{,}348.12 \\
\bottomrule
\end{tabular*}
\caption{Runtime results for sparse and dense graphs.}
\label{tab:exp-scaling}
\end{table}

\begin{figure}[!htbp]
\centering
\includegraphics[width=\linewidth]{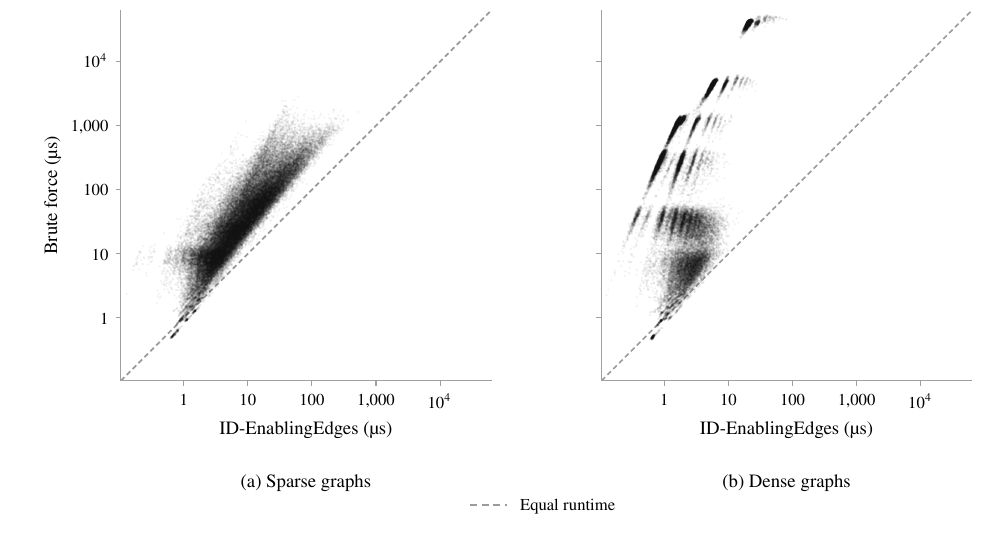}
\caption{Paired runtimes for sparse and dense graphs. Points above the diagonal favor Algorithm~\ref{alg:successful-edges}.}
\label{fig:exp-paired}
\end{figure}

\subsection{Varying edge densities}
We draw random samples of graphs with $n=40$ nodes. We consider all 25 combinations of directed edge probabilities $\{0.05,0.10,0.15,0.20,0.30\}$ and bidirected edge probabilities $\{0.02,0.05,0.10,0.15,0.20\}$, using the same sampling procedure as above. For each combination, we sample $10{,}000$ graphs in which the target query $p(y|\doo(x))$ is not identifiable. Two combinations are already included in the sparse and dense experiments, so we reuse those samples.

Table~\ref{tab:density-coverage} shows the coverage of the graphical criteria in Section~\ref{sec:graphical-criteria}. Note that Theorems~\ref{thm:bidirected_paths}, \ref{thm:directed_child}, and~\ref{thm:one-bidirected-path} fully resolve $87.70$--$99.96\%$ of the nonidentified cases. Moreover, Theorems~\ref{thm:bidirected_paths} and~\ref{thm:directed_child} find an ID-enabling edge in at least $99.92\%$ of cases, whenever a solution exists.

\begin{table}[htbp]
\centering\small
\setlength{\tabcolsep}{4pt}
\renewcommand{\arraystretch}{1.12}
\begin{tabular*}{\linewidth}{@{}r@{\extracolsep{\fill}}rrrr@{}}
\toprule
\shortstack{Directed edge\\probability} & \shortstack{Bidirected edge\\probability} & \shortstack{Queries admitting\\an expansion} & \shortstack{Fully\\resolved (\%)} & \shortstack{Found an\\expansion (\%)} \\
\midrule
0.05 & 0.02 & 9{,}981 & 99.00 [98.79, 99.18] & 100.00 [99.96, 100.00] \\
0.05 & 0.05 & 9{,}919 & 97.39 [97.06, 97.68] & 99.98 [99.93, 99.99] \\
0.05 & 0.10 & 9{,}770 & 92.83 [92.31, 93.32] & 99.98 [99.93, 99.99] \\
0.05 & 0.15 & 9{,}626 & 89.91 [89.30, 90.49] & 99.96 [99.89, 99.98] \\
0.05 & 0.20 & 9{,}509 & 87.70 [87.04, 88.33] & 99.98 [99.92, 99.99] \\
0.10 & 0.02 & 9{,}846 & 99.15 [98.95, 99.31] & 99.98 [99.93, 99.99] \\
0.10 & 0.05 & 9{,}226 & 95.67 [95.25, 96.05] & 99.92 [99.84, 99.96] \\
0.10 & 0.10 & 7{,}796 & 91.43 [90.87, 91.96] & 99.92 [99.83, 99.96] \\
0.10 & 0.15 & 7{,}286 & 91.43 [90.87, 91.96] & 99.93 [99.84, 99.97] \\
0.10 & 0.20 & 7{,}006 & 92.26 [91.72, 92.77] & 99.97 [99.90, 99.99] \\
0.15 & 0.02 & 9{,}582 & 99.43 [99.26, 99.56] & 99.99 [99.94, 100.00] \\
0.15 & 0.05 & 7{,}973 & 96.99 [96.64, 97.31] & 99.94 [99.85, 99.97] \\
0.15 & 0.10 & 5{,}596 & 95.82 [95.41, 96.19] & 99.95 [99.84, 99.98] \\
0.15 & 0.15 & 4{,}625 & 96.67 [96.30, 97.00] & 99.98 [99.88, 100.00] \\
0.15 & 0.20 & 4{,}366 & 97.90 [97.60, 98.16] & 100.00 [99.91, 100.00] \\
0.20 & 0.02 & 9{,}347 & 99.74 [99.62, 99.82] & 100.00 [99.96, 100.00] \\
0.20 & 0.05 & 7{,}102 & 97.64 [97.32, 97.92] & 99.97 [99.90, 99.99] \\
0.20 & 0.10 & 3{,}986 & 97.81 [97.50, 98.08] & 99.97 [99.86, 100.00] \\
0.20 & 0.15 & 3{,}027 & 99.01 [98.80, 99.19] & 99.97 [99.81, 99.99] \\
0.20 & 0.20 & 2{,}805 & 99.27 [99.08, 99.42] & 100.00 [99.86, 100.00] \\
0.30 & 0.02 & 8{,}975 & 99.91 [99.83, 99.95] & 100.00 [99.96, 100.00] \\
0.30 & 0.05 & 5{,}734 & 98.89 [98.67, 99.08] & 99.98 [99.90, 100.00] \\
0.30 & 0.10 & 2{,}524 & 99.50 [99.34, 99.62] & 100.00 [99.85, 100.00] \\
0.30 & 0.15 & 1{,}655 & 99.83 [99.73, 99.89] & 100.00 [99.77, 100.00] \\
0.30 & 0.20 & 1{,}466 & 99.96 [99.90, 99.98] & 100.00 [99.74, 100.00] \\
\bottomrule
\end{tabular*}
\caption{Coverage of the graphical criteria at $n=40$. Brackets give $95\%$ Wilson confidence intervals.}
\label{tab:density-coverage}
\end{table}

We also compare the runtime of Algorithm~\ref{alg:successful-edges} against the naive baseline of expanding and testing every edge, using the same timing procedure as above. The results are shown in Table~\ref{tab:density-runtime}. Across the 25 combinations, median speedups range from $1.74$ to $299.83$.

\begin{table}[htbp]
\centering\small
\setlength{\tabcolsep}{4pt}
\renewcommand{\arraystretch}{1.12}
\begin{tabular*}{\linewidth}{@{}r@{\extracolsep{\fill}}rrrrr@{}}
\toprule
 & \multicolumn{5}{c}{Bidirected edge probability} \\
\cmidrule(l){2-6}
Directed edge probability & 0.02 & 0.05 & 0.10 & 0.15 & 0.20 \\
\midrule
0.05 & 1.74, 1.74 & 1.81, 1.79 & 1.85, 1.82 & 1.88, 1.83 & 1.97, 1.89 \\
0.10 & 4.52, 4.46 & 3.81, 3.66 & 4.80, 4.06 & 7.61, 5.62 & 12.48, 8.14 \\
0.15 & 7.36, 7.08 & 5.75, 5.25 & 10.97, 8.21 & 27.13, 16.53 & 56.84, 29.96 \\
0.20 & 10.40, 9.77 & 7.86, 6.99 & 20.23, 14.71 & 59.63, 35.20 & 124.44, 60.48 \\
0.30 & 16.82, 15.30 & 13.26, 11.31 & 42.25, 32.07 & 139.00, 80.90 & 299.83, 122.59 \\
\bottomrule
\end{tabular*}
\caption{Median paired speedups at $n=40$, with $10{,}000$ nonidentified queries per density combination. Each entry reports the median over all queries, followed by the median among queries admitting an ID-enabling expansion.}
\label{tab:density-runtime}
\end{table}

\subsection{Constructed dense graphs}
\label{app:dense-family}
For each $n\in\{10,20,40,60,80,160\}$, we construct a graph with nodes $Z_1,\ldots,Z_{n-2},X,Y$. We include every directed edge from an earlier to a later node in the order $Z_1,\ldots,Z_{n-2},X$. We also include the edge $X\rightarrow Y$ and the bidirected path $X\leftrightarrow Z_1\leftrightarrow\cdots\leftrightarrow Z_{n-2}\leftrightarrow Y$. The target query $p(y|\doo(x))$ is not identifiable in these graphs.

We compare Algorithm~\ref{alg:successful-edges} against expanding and testing every edge. The results are shown in Figure~\ref{fig:dense-call-growth}. Algorithm~\ref{alg:successful-edges} performs $n+1$ final ID tests, whereas the brute force procedure performs $\binom{n-1}{2}+n$. At $n=160$, for example, these counts are $161$ and $12{,}721$, respectively.

\begin{figure}[t]
\centering
\includegraphics[width=0.6\linewidth]{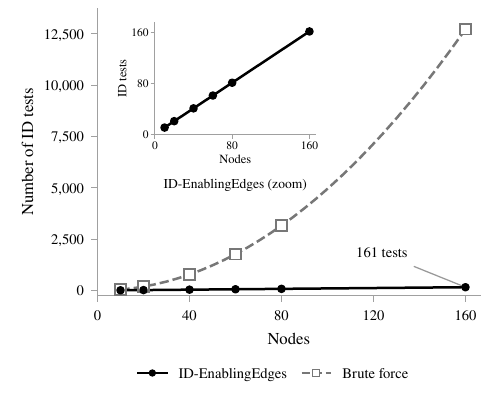}
\caption{Number of final ID tests in the constructed dense graphs.}
\label{fig:dense-call-growth}
\end{figure}

\subsection{Computing environment and replication}
All experiments were run on an Apple M5 Max computer with 18 CPU cores and 64 GB of memory, running macOS 27.0. The algorithms are implemented in C++17 and compiled with Apple clang 21.0 at optimization level \texttt{-O3}. The experiments are run and analyzed in Python 3.12 using NumPy, PyArrow, and Matplotlib. The exhaustive and random experiments were evaluated with 12 parallel workers. Runtimes were then measured on a single thread, with no other experiments running. The complete run took approximately two hours, including correctness checks and runtime measurements, and produced 5.7 GB of saved records. We provide code that reproduces all experiments, tables, and figures from scratch, using a fixed random seed {as Online Supplementary Material.}
\end{toappendix}

\begin{toappendix}
\section{Variants of Algorithm~\ref{alg:successful-edges}}
\label{app:variants}

Here we generalize Algorithm~\ref{alg:successful-edges} by allowing other constructions of the candidate set $\+ C$.

\begin{algorithm}[htbp]
\caption{General form of Algorithm~\ref{alg:successful-edges}.}
\label{alg:general}
\begin{algorithmic}[1]
\Function{ID-EnablingEdges}{$\+ Y,\+ X,\cG$}
    \State $\+ C \gets$ a set of edges of $\cG$ containing every ID-enabling edge
    \State $\+ S \gets \emptyset$
    \For{$e \in \+ C$}
        \If{$\Call{ID}{\+ Y,\+ X,\cG_e}$ does not fail}
            \State $\+ S \gets \+ S \cup \{e\}$
        \EndIf
    \EndFor
    \State \Return $\+ S$
\EndFunction
\end{algorithmic}
\end{algorithm}

We compare variants of Algorithm~\ref{alg:successful-edges} using four necessary conditions for an edge $e$ to be ID-enabling:
\begin{enumerate}[label=(\roman*)]
    \item $e$ belongs to $\cG[\An(\+ Y)_{\cG}]$.
    \item $e$ belongs to every larger graph $\cC_i$ in the pairs returned by HCID.
    \item $e$ is weak directed or weak bidirected in every corresponding ID-hedge.
    \item If $\pi$ is a bidirected path from $X_i\in\+ X$ to a child $C$ of $X_i$, and all nodes of $\pi$ other than $X_i$ lie in $\An(\+ Y)_{\cG_{\bar{\+ X}}}\setminus\+ X$, then
    \[
        e\in E(\pi)\cup\{(X_i,C)\}.
    \]
\end{enumerate}
These conditions follow from Corollary~\ref{thm:characterization} and Theorems~\ref{thm:successful-breaks}, \ref{thm:weak-edges}, and~\ref{thm:one-bidirected-path}, respectively.

The brute force procedure uses (i) alone, while Algorithm~\ref{alg:successful-edges} uses (i)--(iii). Table~\ref{tab:variants} compares the mean number of final ID tests for these and other combinations, including the addition of (iv).

Table~\ref{tab:variants} shows that Theorem~\ref{thm:one-bidirected-path} can be a powerful filter to complement Algorithm~\ref{alg:successful-edges}, but on its own it does not prevent a quadratic number of final ID tests in the worst case. The weak edge conditions used by Algorithm~\ref{alg:successful-edges} guarantee at most $2n-2$ such tests.

\begin{table}[htbp]
\centering\small
\setlength{\tabcolsep}{6pt}
\begin{tabular}{rrrrrrr}
\toprule
 & \multicolumn{5}{c}{Conditions used} & \\
\cmidrule(lr){2-6}
$n$ & (i) & (i)--(ii) & (i)--(iii) & (i)--(iv) & (i) + (iv) & $|\+ S|$ \\
\midrule
\multicolumn{7}{l}{\emph{Exhaustive experiments}} \\
4 & 7.65 & 6.60 & 2.74 & 1.51 & 1.67 & 1.48 \\
5 & 11.86 & 10.55 & 2.70 & 1.17 & 1.42 & 1.11 \\
6 & 17.07 & 15.83 & 2.38 & 0.85 & 1.13 & 0.79 \\
\addlinespace
\multicolumn{7}{l}{\emph{Sparse graphs}} \\
10 & 20.35 & 13.47 & 4.14 & 1.72 & 2.08 & 1.64 \\
20 & 32.59 & 17.31 & 6.01 & 2.29 & 3.04 & 2.12 \\
40 & 47.55 & 17.86 & 7.56 & 2.81 & 3.68 & 2.66 \\
60 & 57.16 & 16.78 & 8.08 & 3.04 & 3.99 & 2.90 \\
80 & 62.93 & 14.18 & 7.75 & 3.07 & 3.85 & 2.97 \\
160 & 71.90 & 8.14 & 5.85 & 2.87 & 3.27 & 2.84 \\
\addlinespace
\multicolumn{7}{l}{\emph{Dense graphs}} \\
10 & 14.76 & 10.00 & 4.06 & 2.02 & 2.38 & 1.91 \\
20 & 63.53 & 58.60 & 3.67 & 0.79 & 1.28 & 0.71 \\
40 & 308.15 & 307.70 & 0.48 & 0.15 & 0.15 & 0.15 \\
60 & 755.09 & 755.08 & 0.16 & 0.09 & 0.09 & 0.09 \\
80 & 1{,}403.40 & 1{,}403.40 & 0.11 & 0.07 & 0.07 & 0.07 \\
160 & 5{,}988.83 & 5{,}988.83 & 0.06 & 0.03 & 0.03 & 0.03 \\
\addlinespace
\multicolumn{7}{l}{\emph{Constructed dense family}} \\
10 & 46 & 46 & 11 & 11 & 46 & 11 \\
20 & 191 & 191 & 21 & 21 & 191 & 21 \\
40 & 781 & 781 & 41 & 41 & 781 & 41 \\
60 & 1{,}771 & 1{,}771 & 61 & 61 & 1{,}771 & 61 \\
80 & 3{,}161 & 3{,}161 & 81 & 81 & 3{,}161 & 81 \\
160 & 12{,}721 & 12{,}721 & 161 & 161 & 12{,}721 & 161 \\
\bottomrule
\end{tabular}
\caption{Mean number of final ID tests per nonidentified query using different combinations of necessary conditions. The last column gives the mean number of ID-enabling edges. Constructed-family rows give exact counts for the graphs in Appendix~\ref{app:dense-family}. Case~(i) matches the brute force approach of testing every edge. Algorithm~\ref{alg:successful-edges} uses (i)--(iii) (third column).}
\label{tab:variants}
\end{table}
\end{toappendix}
\section{Conclusion}

\begin{figure}[t]
\centering
\includegraphics[width=\linewidth]{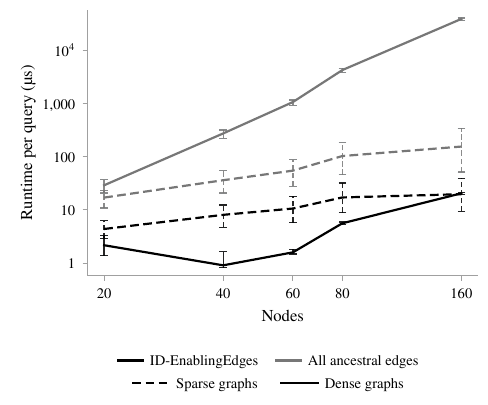}
\caption{Runtime on nonidentified queries. Curves show medians and vertical bars show interquartile ranges over $10{,}000$ queries per setting.
}
\label{fig:measurement-runtime}
\end{figure}

We introduced a framework for determining where further investigation of a causal mechanism or source of confounding could make a previously nonidentifiable effect identifiable in a semi-Markovian causal model. We gave simple graphical conditions for finding ID-enabling edges and ruling out edges whose expansion cannot enable identification. By relating failures of the ID algorithm to ID-hedges, we characterized the directed and bidirected edges that can be expanded to eliminate the obstructions to identification. This characterization leads to a sound and complete algorithm that finds all ID-enabling edges. The algorithm runs final identification tests on at most \(2n-2\) candidate edges, and this bound is sharp. In our experiments, the graphical conditions fully resolve most nonidentified queries, and the algorithm can be more than a thousand times faster than testing every edge. Identification theory is usually used to decide whether a given causal model identifies the effect of interest. Our results show that it can also guide how the model should be refined, by indicating which mechanisms or sources of confounding, once measured, would make the effect identifiable. Extending the approach to several simultaneous edge expansions is left for future work.

\newpage

\subsection*{AI use statement}

We used generative AI tools to copyedit the paper, conduct extensive adversarial checks of our arguments and results, and write performant code for the simulations.
We take responsibility for the final content of this work, including text, claims, or artifacts produced with the aid of generative AI.

\bibliography{bibliography}

\clearpage
\appendix
\thispagestyle{empty}

\onecolumn
\aistatstitle{
Supplementary Materials}

\end{document}